\newcommand{\bra}[1]{\left\langle #1 \right|}
\newcommand{\ket}[1]{\left| #1 \right\rangle}
\newcommand{\braket}[2]{\left\langle #1 \middle| #2 \right\rangle}
\newcommand{\ketbra}[2]{\left|#1\middle\rangle\middle\langle#2\right|}
\newcommand{\id}{\mathbb{I}}
\newcommand{\M}[1]{\mathcal{#1}}
\newcommand{\B}[1]{\mathbb{#1}}
\newcommand{\Tr}{\text{Tr}}
\newcommand{\apos}{\textsc{\char13}}
\newtheorem{teo}{Theorem}
\newtheorem{defi}[teo]{Definition}
\newtheorem{cor}[teo]{Corolary}
\newtheorem{prop}[teo]{Proposition}
\begin{document}
\title{Role of Quantumness of Correlations in Entanglement Resource Theory} 
\author{Tiago Debarba} 
\affiliation{Universidade Tecnol\'ogica Federal do Paran\'a (UTFPR), Campus Corn\'elio Proc\'opio, Avenida Alberto Carazzai 1640, Corn\'elio Proc\'opio, Paran\'a 86300-000, Brazil}
\email{debarba@utfpr.edu.br}

\date{\today}
\begin{abstract}
Quantum correlations: entanglement and quantumness of correlations are main resource for quantum information theory. In this chapter it is presented the scenarios which    quantumness of correlations plays an interesting role in entanglement distillation protocol.  By means of Koashi - Winter relation, it is discussed that quantumness of correlations are related to the irreversibility of the entanglement distillation protocol. The activation protocol is introduced, and it is proved that quantumness of correlations can create distillable entanglement between the system and the measurement apparatus during a local measurement process.  
\end{abstract}

\keywords{quantumness of correlations, quantum entanglement, entanglement distillation}

\maketitle
\section{Introduction}
Quantum entanglement plays the fundamental role in quantum information and computation 
\cite{chuang,horodeckireview}. The resource theory of quantum entanglement, entanglement distillation 
\cite{devetak05} and entanglement cost \cite{hayden01}, 
revealed one of most fundamental aspects of quantum mechanics.  
Entanglement distillation protocol consist into convert a number of copies of an entangled state in few 
copies of maximally entangled states, by means of local operations and classical communication (LOCC) 
\cite{bennettprl}. As maximally entangled states are the main resource of the quantum information,  
entanglement distillation protocol has many applications in this scenario, as quantum teleport \cite{teleport}, 
 quantum error correction \cite{bennettpra} and quantum cryptography \cite{bb84}. A family of quantum 
 information protocol arises from distillation of quantum entanglement and secret keys \cite{mother,devetak05}

 However independently Ollivier and Zurek \cite{zurek2001}, and 
Henderson and Vedral \cite{henderson} found a 
new quantum property, without counterpart in classical systems.   
They named it as the  {\it quantumness of correlations}. 
 This new kind of correlation reveals the amount of 
information destroyed during the local 
measurement process, and goes beyond the quantum entanglement. 
There are many equivalent formulations for characterization and quantification of quantumness 
of correlations: quantum discord \cite{zurek2001,henderson}, minimum local disturbance 
\cite{opp,fu08,taka} and geometrical  approach \cite{luofu,modiprl2010,debarbapra}. 

This chapter presents in details two different ways to relate quantum entanglement 
and quantumness of correlations. The main purpose of this chapter is to discuss that 
quantumness of correlations plays an interesting role in entanglement distillation protocol.  
Entanglement and quantumness of correlations connect each other in two different pictures. 
The relation derived by Koashi and Winter \cite{kw04} demonstrates the balance between quantumness of 
correlations and entanglement in the purification process \cite{fanchini11}.  
This balance leads to a formal proof for the irreversibility of the entanglement distillation protocol, 
in terms of quantumness of correlations \cite{cornelio11}. 
 In the named {\it activation protocol}, the quantumness of correlations of a given composed 
 system can be converted into distillable entanglement with a measurement apparatus during the local measurement process \cite{pianiprl, streltsov11}.   
 
The chapter is organized as follow. 
In Sec.\ref{sec.math} a mathematical overview is presented, 
and the notation is defined. Sec.\ref{q.c.} introduces  
some important concepts about the notion of quantum correlations: entanglement and 
quantumness of correlations. 
In Sec.\ref{seckw} is presented the Koashi-Winter relation, and its role in the irreversibility of 
quantum distillation process. 
Section \ref{secap} is intended to 
the description of the activation protocol, and the demonstration that 
quantumness of correlation can be activated  into distillable entanglement.  

\section{Mathematical Overview}\label{sec.math}
This section introduces some quantum information concepts and defines the notation used in the chapter. 
\subsection{Density Matrix and Quantum Channels}
As the convex combination of positive matrices is also positive, then the space of positive operators forms a convex cone in Hilbert - Schmidt $\M{L}(\B{C}^N)$ \cite{zyc}. 
If we restrict the matrices in the positive cone to be trace=$1$, we arrive to another set of matrices, that is named 
the set of density matrices. This set of operators also originates a vector space, this space is denote as $\M{D}(\B{C}^N)$. Therefore the matrices that belong 
to this set, or the vectors in this vector space, are named {\it density matrices}.
\begin{defi}
A linear positive operator $\rho\in\M{D}(\B{C}^N)$ is a density matrix, and represents the state of 
a quantum system, if it satisfies the following properties:
\begin{itemize}
\item{Hermitian}: $\rho = \rho^{\dagger}$;
\item{Positive semi-definite}: $\rho \geq 0;$
\item{Trace one}: $\Tr(\rho) =1$
\end{itemize} 
\end{defi}
As the convex combination of density matrices is a density matrix, 
the vector space $\M{D}$ is a convex set whose pure states are projectors 
onto the real numbers. A given density matrix $\rho\in\M{D}(\B{C}^N)$ is 
a pure state if it satisfies:
\begin{eqnarray}
\rho &=& \rho^2,
\end{eqnarray}
then the state $\rho$ is a rank-1 matrix and it can be written as:
\begin{equation}
\rho = \ketbra{\psi}{\psi}. 
\end{equation}
The set of pure states is  a $2(N-1)$-dimensional subset of the $(N^2-2)$-dimensional boundary of 
$\M{D}(\B{C}^N)$. Every state with at least one eigenvalue equal to zero 
belongs to the boundary \cite{zyc}. For 2-dimensional systems 
(it is also named qubit \cite{schumacher95}) the boundary is just composed by pure states. 

Consider a linear transformation $\Phi: \M{L}(\B{C}^N)\rightarrow \M{L}(\B{C}^M)$. 
This map represents a physical process, if it satisfies some conditions, 
determined by the mathematical properties of the density matrices.
Indeed, to represent a physical process the transformation must map a quantum 
state into another quantum state, $\Phi:\M{D}(\B{C}^N)\rightarrow\M{D}(\B{C}^M)$. 
It holds if $\Phi$ satisfy the following properties:
\begin{itemize}
\item{\bf Linearity:}
As a quantum state can be a convex combination of other quantum states, 
the map must be linear. For two arbitrary operators $\rho,\sigma\in\M{D}(\B{C}^N)$
\begin{equation}
\Phi(\rho + \sigma) = \Phi(\rho) + \Phi(\sigma); 
\end{equation}

\item{\bf Trace preserving:}
The eigenvalues of the density matrix represent probabilities, and it sum must be one, then a quantum channel must to keep the trace of the density matrix:
\begin{equation}
\Tr[\Phi(\rho)]=1. 
\end{equation}
\item{\bf Completely Positive:} Consider a channel $\Phi:\M{D}(\B{C}_A)\rightarrow\M{D}(\B{C}_A)$ and a quantum state  $\rho,\sigma\in\M{D}(\B{C}_A\otimes\B{C}_B)$, then
\begin{equation}
\id\otimes\Phi(\rho)\geq 0.
\end{equation}
\end{itemize} 
The map that satisfies this property is named {\it completely positive map}.
The linear transformations which map quantum states in quantum states 
are named Completely Positive and Trace Preserve (CPTP) {\it quantum channels}. 
The space of quantum channels that maps $N\times N$ density matrices onto $M\times M$ density matrices 
is denoted as $\M{C}(\B{C}^N,\B{C}^M)$.

\subsection{Measurement}
\label{chap1.2.4}

Measurement is a classical statistical inference of quantum systems. 
The measurement process maps a quantum state into a classical probability distribution.

We can define a measurement as a function $\Pi: \Sigma \rightarrow \M{P}(\B{C}_{\Gamma})$ \footnote{
Just to clarify the notation, when we write a subscript  in the complex 
euclidean vector space, as $\B{C}_{\Gamma}$, it represents a label to the space, it shall be 
very useful when we study composed systems. When we write a superscript on it, it  
represents the dimension of the complex vector space. For example, if $\text{dim}(\B{C}_{\Gamma}) = N$, 
we can also represent this space as $\B{C}^N$, the usage of the notation will depend on the context.}, 
associating 
an alphabet $\Sigma$ to positive operators $\{\Pi_x \}_x\subset\M{P}(\B{C}_{\Gamma})$.
For a given density matrix $\rho\in\M{D}(\B{C}_{\Gamma})$, the measurement process consist in to chose an element of $\Sigma$ randomly. 
This random choice is represented by a probability vector $\vec{p}\in \B{R}^{N}_{+}$, with $N$ being 
the cardinality of the random variable described by $\vec{p}$. 
The elements of the probability vector $\vec{p}$ are given by:
\begin{equation}
p_x = \Tr(\Pi_x \rho),
\end{equation}
where $\Pi_x$ is the {\it measurement operator} associated to $x\in\Sigma$. The alphabet $\Sigma$ 
is the set of measurement outcomes, and the vector $\vec{p}$ is the classical probability vector 
associated to the measurement process $\Pi$ of a given density matrix $\rho$. 
As the outcomes are elements of a probability vector, these elements must be positive and sum to one. 
Which implies that the measurement operators must sum to identity: 
\begin{equation}
\sum_x \Pi_x = \id_{\Gamma},  
\end{equation} 
where $\id_{\Gamma}$ is the identity matrix in $\B{C}_{\Gamma}$. It is easy to check that 
this condition implies $\sum_{x}p_x=1$:
\begin{equation}
\sum_x p_x  = \sum_x \Tr(\Pi_x \rho) = \Tr(\sum_x \Pi_x \rho) = \Tr(\rho) = 1.
\end{equation}
For instance we shall restrict the measurements to a subclass of measurement operators named 
{\it projective measurements}. As it is shown later, its generalization can be performed via the Naimark's theorem. 
For projective measurements the cardinality of 
$\vec{p}$ is at least the dimension of $\rho$, and the measurement operators are projectors:
\begin{equation}
\Pi_x^2 = \Pi_x, 
\end{equation} 
for any $x\in\Sigma$. 
If we consider an orthonormal basis $\{ \ket{e_x} \}$, where the vectors $\ket{e_x}$ span 
$\B{C}_{\Gamma}$, this set represents a projective measurement for $\Pi_x = \ketbra{e_x}{e_x}$. 
The output state is described by the expression:
\begin{equation}
\rho_x = \frac{\Pi_x \rho \Pi_x}{\Tr(\Pi_x \rho)}. 
\end{equation}
The set of operators defines a convex hull in  $\M{P}(\B{C}_{\Gamma})$, then a measured state 
represents a pure state in this convex hull. 
In this way, the post - measurement state can be reconstruct by the convex  combination of the output states $\rho = \sum_x p_x \rho_x$.

As physical processes are described by quantum channels, it is possible to describe the classical 
statistical inference of the quantum measurements as a CPTP channel.
A channel that maps a quantum state into a probability vector is the dephasing channel. Therefore the post - measurement state is 
the state under the action of the dephasing channel. . 
\begin{teo}
A given map $\Phi\in\M{C}(\B{C}_{\Gamma},\B{C}_{\Gamma'})$ is a measurement if and only if:
\begin{equation}\label{g666ff}
\Phi(\rho) = \sum_x \Tr(M_x \rho)\ketbra{e_x}{e_x},
\end{equation}
where $\rho\in\M{D}(\B{C}_{\Gamma})$, $M_x\in\M{P}(\B{C}_{\Gamma})$ and $\ket{e_x}\in\B{C}_{\Gamma'}$.
\end{teo}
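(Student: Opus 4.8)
The plan is to prove the two implications of the equivalence separately, relying on the characterization of measurements from Sec.~\ref{chap1.2.4}: a measurement is specified by a family of positive operators $\DE{M_x}_x\subset\M{P}(\B{C}_\Gamma)$ obeying the completeness relation $\sum_x M_x=\id_\Gamma$, and it assigns to the input state $\rho$ the classical probability vector with entries $p_x=\Tr(M_x\rho)$.

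For the forward implication (a measurement has the stated form), I would start from the fact that the result of the inference is a classical probability distribution $\DE{p_x}_x$, and that any CPTP channel realizing this inference must output a genuinely classical state, i.e.\ one whose outcomes are perfectly distinguishable. Faithful encoding of mutually exclusive outcomes into a quantum state forces the use of an orthonormal family $\DE{\ket{e_x}}_x\subset\B{C}_{\Gamma'}$, so that $\Phi(\rho)$ is diagonal in this basis with the measurement probabilities on the diagonal. Substituting $p_x=\Tr(M_x\rho)$ then yields exactly Eq.~\eqref{g666ff}.

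For the backward implication (a map of the stated form is a measurement), I would verify the three defining properties of a CPTP channel and then identify the output as a classical register. Linearity is immediate from linearity of the trace. To obtain a Kraus decomposition — which simultaneously certifies complete positivity — I would diagonalize each positive operator as $M_x=\sum_j\ketbra{\phi_{x,j}}{\phi_{x,j}}$ and set $K_{x,j}=\ketbra{e_x}{\phi_{x,j}}$; a direct computation gives $\sum_{x,j}K_{x,j}\,\rho\,K_{x,j}^\dagger=\sum_x\Tr(M_x\rho)\ketbra{e_x}{e_x}=\Phi(\rho)$, so the map is completely positive. Trace preservation follows from $\Tr\De{\Phi(\rho)}=\sum_x\Tr(M_x\rho)=\Tr\De{\de{\sum_x M_x}\rho}$, which equals $\Tr(\rho)=1$ precisely because of the completeness relation $\sum_x M_x=\id_\Gamma$; equivalently $\sum_{x,j}K_{x,j}^\dagger K_{x,j}=\sum_x M_x=\id_\Gamma$. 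Finally, since $\Phi(\rho)$ is by construction diagonal in the fixed orthonormal basis $\DE{\ket{e_x}}$, its eigenvalues $\Tr(M_x\rho)$ form a legitimate probability vector, so $\Phi$ is the dephasing-type channel associated with the POVM $\DE{M_x}$ and hence a measurement.

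The step I expect to be the main obstacle is the forward direction, specifically justifying rigorously that the output of a measurement channel must be diagonal in an orthonormal basis rather than merely some quantum encoding of the probabilities; this requires invoking that classical outcomes are perfectly distinguishable and that perfect distinguishability of the encoding states forces orthogonality. On the backward side the only nontrivial ingredient is complete positivity, which the explicit Kraus form $K_{x,j}=\ketbra{e_x}{\phi_{x,j}}$ settles cleanly, reducing the remaining checks to the algebra of the completeness relation.
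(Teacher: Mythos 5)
The paper states this theorem without proof, so there is no internal argument to compare yours against; judged on its own, your proposal is essentially correct and supplies the content the paper leaves implicit. The backward direction is the substantive part and your treatment is clean: the Kraus operators $K_{x,j}=\ketbra{e_x}{\phi_{x,j}}$ built from the spectral decomposition $M_x=\sum_j\ketbra{\phi_{x,j}}{\phi_{x,j}}$ certify complete positivity by exhibiting $\Phi$ explicitly in Kraus form, and your computation correctly isolates where the completeness relation $\sum_x M_x=\id_{\Gamma}$ enters --- a hypothesis the statement of the theorem omits but which the paper's definition of a measurement supplies, and without which $\Phi$ in Eq.~\eqref{g666ff} would not be trace preserving. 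Note that you also need the $\ket{e_x}$ to be orthonormal (as the paper assumes for the register basis) for $\sum_{x,j}K_{x,j}^{\dagger}K_{x,j}=\sum_x M_x$ to come out right; with a non-orthogonal family the map is still completely positive but is no longer the dephasing-type channel the theorem describes. For the forward direction you are right to flag the orthogonality of the encoding states as the delicate point, but within this paper the issue is largely definitional: a measurement is introduced from the outset as a POVM $\DE{M_x}_x$ producing the probability vector with entries $p_x=\Tr(M_x\rho)$, and the associated channel is by convention the map that writes these probabilities on a fixed orthonormal register basis. Your distinguishability argument is the natural way to promote that convention to a theorem, but as written it rests on an operational premise (perfect distinguishability of classical outcomes forces orthogonal encoding states) that the paper neither formalizes nor requires, so you should either state it as an explicit assumption or accept the forward implication as following directly from the paper's definition of a measurement channel.
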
 
In order to differ the set of measurement channels from a general CPTP channel, 
this set is represented as $\M{P}$. 
A given measurement map 
$\M{M}\in\M{P}(\B{C}_{\Gamma},\B{C}_{\Gamma'})$ is a quantum channel that maps 
a density matrix in a probability vector,   
$\M{M}:\M{D}(\B{C}_{\Gamma})\rightarrow \B{R}_{\Gamma'}^{+}$. This probability 
vector is described by a diagonal density matrix as in Eq.\eqref{g666ff}.
The dimension of $\B{C}_{\Gamma'}$ is the number of outcomes of the measurement.

For general measurements, described by Positive Operators Valued Measure (POVM), the measurement process can be 
described by a measurement channel $\Phi\in\M{P}(\B{C}_{\Gamma},\B{C}_{\Gamma’})$. 
The description performed above can be follow to describe these general measurements, indeed projective measurements are a 
a restriction for a POVM composed by orthogonal operators. 
Consider a set of positive operators $\{ M_x \}_x$,  representing a POVM, then  
$\Tr[M_x\rho] = p_x$ are the elements of a probability vector $\vec{p}\in\B{R}_{\Gamma’}^{+}$, then the post - measurement state is:
\begin{equation}
\Phi(\rho) = \sum_x p_x \ketbra{e_x}{e_x}. 
\end{equation}
where $\{ \ket{e_x} \}_x$ is an orthonormal basis in $\B{C}_{\Gamma’}$.

Via the Naimark's theorem it is possible the measurement channel is described as a dephasing 
channel on a state in a enlarged space.   
In other words, for POVMs whose elements are 
rank-1 and linearly independent, 
it is possible to associate a projective measurement on an enlarged space. 
\begin{teo}[Naimark's Theorem]
Given a quantum measurement $\M{M}\in\M{P}(\B{C}_{\Gamma},\B{C}_{\Gamma'})$, 
with POVM elements $\{M_x \}_{x=0}^{M}$,   
there exists a projective measurement $\Pi\in\M{P}(\B{C}_{\Gamma'})$, 
with elements $\{ \Pi_y\}_{y=0}^{M}$ such that:
\begin{equation}
\Tr(M_x\rho) = \Tr(\Pi_x V \rho V^{\dagger}), 
\end{equation}
where $V \in \M{U}(\B{C}_{\Gamma},\B{C}_{\Gamma'})$ is an isometry.
\end{teo}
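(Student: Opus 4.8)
The plan is to construct the isometry $V$ and the projectors $\{\Pi_x\}$ explicitly from the POVM data and then verify the claimed trace identity by direct computation. Since the relevant measurements are rank-1, I would write each POVM element as $M_x = \ketbra{\mu_x}{\mu_x}$, where the vectors $\ket{\mu_x}\in\B{C}_{\Gamma}$ are in general unnormalized. The completeness relation $\sum_x M_x = \id_{\Gamma}$ then reads $\sum_x \ketbra{\mu_x}{\mu_x} = \id_{\Gamma}$, and this single identity is what the whole construction will lean on.

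First I would fix an orthonormal basis $\{\ket{e_x}\}$ of the enlarged space $\B{C}_{\Gamma'}$, whose dimension equals the number of outcomes, and set
\begin{equation}
V = \sum_x \ketbra{e_x}{\mu_x}.
\end{equation}
To confirm that $V$ is an isometry I compute, using $\braket{e_x}{e_y}=\delta_{xy}$ followed by completeness,
\begin{equation}
V^{\dagger}V = \sum_{x,y}\ketbra{\mu_x}{e_x}\,\ketbra{e_y}{\mu_y} = \sum_x \ketbra{\mu_x}{\mu_x} = \id_{\Gamma},
\end{equation}
so that $V\in\M{U}(\B{C}_{\Gamma},\B{C}_{\Gamma'})$ as required. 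I then take the projective measurement on $\B{C}_{\Gamma'}$ to be $\Pi_x = \ketbra{e_x}{e_x}$, which are manifestly orthogonal projectors summing to $\id_{\Gamma'}$. The key observation is that $\bra{e_x}V = \bra{\mu_x}$, equivalently $V^{\dagger}\ket{e_x}=\ket{\mu_x}$, after which the right-hand side collapses:
\begin{equation}
\Tr(\Pi_x V\rho V^{\dagger}) = \exval{e_x}{V\rho V^{\dagger}}{e_x} = \exval{\mu_x}{\rho}{\mu_x} = \Tr(\ketbra{\mu_x}{\mu_x}\rho) = \Tr(M_x\rho),
\end{equation}
which is exactly the statement of the theorem.

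The main obstacle is conceptual rather than computational: one must guarantee that $V$ genuinely extends to an isometry, and this is precisely where the rank-1 and linear-independence hypotheses enter. Linear independence of $\{\ket{\mu_x}\}$ ensures that $\B{C}_{\Gamma'}$ is large enough to host an orthonormal basis $\{\ket{e_x}\}$ indexed by all outcomes, so that $V$ is a faithful embedding rather than a degenerate map. For a general POVM whose elements need not be rank-1, the same scheme goes through after replacing $\ket{\mu_x}$ by the operator $\sqrt{M_x}$ and using the ancilla embedding $V=\sum_x \sqrt{M_x}\otimes\ket{e_x}$ together with $\Pi_x = \id_{\Gamma}\otimes\ketbra{e_x}{e_x}$; completeness again gives $V^{\dagger}V=\sum_x M_x = \id_{\Gamma}$ and the identical cancellation reproduces $\Tr(M_x\rho)$. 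I would present the rank-1 version in the main text, since it is the case invoked by the activation protocol, and remark on the general dilation only in passing.
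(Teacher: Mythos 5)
Your proof is correct, and it supplies an argument the paper itself omits: the chapter states Naimark's theorem without proof, offering only the remark that the simplest embedding is $V=\id_{\Gamma}\otimes\ket{0}_E$ together with $M_x=(\id_{\Gamma}\otimes\bra{0}_E)\Pi_x(\id_{\Gamma}\otimes\ket{0}_E)$, which is essentially the general-POVM ancilla dilation you sketch in passing at the end. Your main construction, $V=\sum_x\ketbra{e_x}{\mu_x}$ with $\Pi_x=\ketbra{e_x}{e_x}$, is the standard minimal dilation for rank-1 POVMs; the isometry property follows from completeness alone, and the verification of $\Tr(\Pi_x V\rho V^{\dagger})=\Tr(M_x\rho)$ is carried out correctly. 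One quibble: your closing remark misattributes the role of linear independence. The space $\B{C}_{\Gamma'}$ hosts the orthonormal set $\{\ket{e_x}\}$ simply because its dimension is fixed to be the number of outcomes; linear independence of the vectors $\ket{\mu_x}$, which live in $\B{C}_{\Gamma}$ rather than $\B{C}_{\Gamma'}$, enters nowhere in your computation. Its actual function in the standard treatment is to guarantee that the dilation of dimension equal to the number of outcomes is the minimal one (equivalently, that rank-1 orthogonal projectors compressing back exactly to the $M_x$ exist on a space of that size), not to make $V$ well defined. As you have written it, the proof needs only the completeness relation $\sum_x M_x=\id_{\Gamma}$, and that is a feature, not a defect.
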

The action of the isometry on the state $\rho$, in the Naimark's theorem, is named 
embedding operation. The simplest way to embed the state $\rho$, it is simply coupling a pure 
ancillary system. In this way the isometry will be $V = \id_{\Gamma}\otimes\ket{0}_E$ 
and the enlarged space $\B{C}_{\Gamma'}=\B{C}_{\Gamma}\otimes\B{C}_{E}$. For this 
simple case the relation between the POVM elements $\{M_x\}_x$ and the projective measurement 
on the enlarged space  $\{ \Pi_x \}_x$:
\begin{equation}
M_x = (\id_{\Gamma}\otimes\bra{0}_E) \Pi_x (\id_{\Gamma}\otimes\ket{0}_E). 
\end{equation}

As the measurement can be described by a quantum channel, we can study how quantum measurements 
can be performed locally. 
\begin{defi}
Given a $N$-partite composed system, represented by the state 
$\rho_{A_1,...,A_N}\in\M{D}(\B{C}_{A_1}\otimes\cdots\otimes\B{C}_{A_N})$, 
we define the measurement on each subsystem applied locally: 
\begin{equation}
\Phi_{A_1}\otimes\cdots\otimes\Phi_{A_N}(\rho_{A_1,...,A_N}) = \sum_{\vec{k}}
\Tr[M^{A_1}_{k_1}\otimes\cdots\otimes M^{A_N}_{k_N}\rho_{A_1,...,A_N}]\ketbra{\vec{k}}{\vec{k}},
\end{equation}
where $\ket{\vec{k}} = \ket{k_1}\otimes\cdots\otimes\ket{k_N}$ and the label $\vec{k}$ in the sum  
represents the set of indexes $k_1,...,k_N$. $\{ M_{k_x}^{A_x}\}_{k_x}$ 
are the measurement operators on each subsystem.
\end{defi}

Suppose the measurement is performed on some subsystems, remaining other subsystems unmeasured.  
Consider a 
bipartite system $\rho_{AB}\in\M{D}(\B{C}_A\otimes\B{C}_B)$, and a measurement 
acting on the system $B$, then the measurement map will be written as:
\begin{equation}
\id_A\otimes\Phi_B(\rho_{AB}) = \sum_x\Tr_B[\id_A\otimes M^B_x \rho_{AB}]\otimes \ketbra{b_x}{b_x}.
\end{equation} 
As the measurement is not acting on $A$, the post-measured state on $A$ will remain the same. 
If we write $p_x = \Tr_{AB}[\id_A\otimes M^B_x \rho_{AB}]$ and $\rho_x^{A} = 
\frac{\Tr_B[\id_A\otimes M^B_x \rho_{AB}]}{\Tr_{AB}[\id_A\otimes M^B_x \rho_{AB}]}$,  
the post-measured state will be:
\begin{equation}
\id_A\otimes\Phi_B(\rho_{AB}) = \sum_x p_x \rho_x^{A} \otimes \ketbra{b_x}{b_x}. 
\end{equation}
As the measurement is a classical statistical inference process, the local measurement process destroys the quantum correlations between the 
systems. Indeed the  post-measured state is not a classical probability distribution, although it 
only has classical correlations.

\subsection{Quantum Entropy}
Consider that one can prepare an ensemble of 
quantum states $\xi = \{p_x,\rho_x \}_x$, accordingly to some random variable $X$. 
Classical information can be extracted from the ensemble of quantum states, in the 
form of a variable $Y$, performing measurements on the quantum system. 
The conditional probability distribution to obtain a value $y$,  
given as input the state $\rho_x$ is:
\begin{equation}
p(y|x) = \Tr(M_y\rho_x), 
\end{equation}
where $\{ M_y\}_y$ is a POVM.  
The joint probability distribution $X$ and $Y$ is given by:
\begin{equation}
p(x,y) = p_x\Tr(M_y\rho_x).
\end{equation}
The probability distribution of $Y$ is obtained from the marginal probability distribution:
\begin{equation}
p(y)=\sum_x p(x,y) = \sum_x p_x\Tr(M_y\rho_x) = \Tr(M_y \sum_x p_x\rho_x). 
\end{equation} 
Considering the Bayes rule:
\begin{equation}
p(x,y) = p_x p(y|x) = p(y) P(x|y), 
\end{equation}
it is possible to obtain the conditional probability distribution with elements:
\begin{equation}
P(x|y)=\frac{p_x p(y|x)}{p(y)}. 
\end{equation}

Even in the case the system is always  prepared in the same state, 
there exists an uncertainty about the measured of an observable. 
The probability distributions presented above are 
evidencing this uncertainty, for the measurement observables of a POVM. 
These probability distributions are classical probability distributions extracted from the quantum 
systems, and the Shannon entropy quantifies the degree of 
surprise related to a given result.

It is also possible to define a quantum analogous to the Shannon 
entropy. This quantum entropy is named {von Neumann  
entropy}, and in analogy with Shannon entropy, it is defined as 
the expectation value of the operator $\log_2(\rho)$. 
\begin{defi}[von Neumann entropy]
Given a density operator $\rho\in\M{D}(\B{C}^N)$, the quantum version of 
the Shannon entropy is defined as the function:
\begin{equation}
S(\rho) = - \Tr[\rho \log_2{\rho}].
\end{equation}
\end{defi}
The von Neumann entropy can be rewritten as:
\begin{equation}
S(\rho) = - \sum_k \lambda_k \log_2(\lambda_k),
\end{equation}
where $\{ \lambda_k\}_k$ are the eigenvalues of $\rho=\sum_k \lambda_k \ketbra{k}{k}$. The von Neumann 
entropy has the same interpretation of the Shannon entropy for the probability 
distribution composed by the eigenvalues of the density matrix.
The von Neumann 
entropy is zero of pure states, and it is maximum for  the  
maximally mixed state $\id/N$, where it is $S(\id/N) = \log_2N$. 

For  composed systems the von Neumann entropy is analogous to the Shannon entropy of the 
joint probability. For a bipartite state $\rho_{AB}$, the joint von Neumann 
entropy is:
\begin{equation}
S(\rho_{AB}) = -\Tr(\rho_{AB} \log_2 \rho_{AB}).
\end{equation}

Follow some interesting, and useful, properties about von Neumann entropy: 
\begin{enumerate}
\item{(\bf Pure States)} \label{teo50} For a bipartite pure state $\ket{\phi}_{AB}\in\B{C}_A\otimes\B{C}_B$, 
the partitions have the same von Neumann entropy:
\begin{equation}
S(\rho_A) = S(\rho_B),
\end{equation}
where $\rho_A = \Tr_B(\ketbra{\phi}{\phi}_{AB})$.

\item{(\bf Additivity)} \label{prop5dd} von Neumann entropy is additive:
\begin{equation}
S(\rho\otimes\sigma) = S(\rho)+ S(\sigma),
\end{equation}
where $\rho$ and $\sigma$ are density matrices.

\item{(\bf Concavity)} \label{item.3} von Neumann entropy is a concave function: 
\begin{equation}
S(\sum_ip_i \rho_i) \geq \sum_ip_i S(\rho_i),
\end{equation}
for a convex combination $\rho=\sum_i p_i \rho_i $.  

\item{(\bf Classical - Quantum states)}\label{item.4} For bipartite state in the form $\rho_{AB} = \sum_x p_x \ketbra{x}{x}\otimes\rho_x$, 
the von Neumann entropy will be:		
\begin{equation}
S(\sum_x p_x \ketbra{x}{x}\otimes\rho_x) =H(X)+ \sum_x p_x S(\rho_x), 
\end{equation}
where $H(X)= - \sum_x p_x \log_2 p_x $ 

\end{enumerate}

For composed system it is possible to define a quantum analogous to the mutual information for bipartite 
states.  
\begin{defi}[Mutual information]
Given a bipartite state $\rho_{AB}\in\M{D}(\B{C}_A\otimes\B{C}_B)$ the 
quantum mutual information is defined as:
\begin{equation}
I(A:B)_{\rho_{AB}} = S(\rho_A) + S(\rho_B) - S(\rho_{AB}).
\end{equation}
\end{defi} 
The quantum mutual information of $\rho_{AB}$ quantifies the correlations 
in quantum systems. It can be interpreted as the number of qubits that  
one part must send to another to destroy the correlations between the entire system. 
As the amount of correlations in a quantum state must be positive, it is possible to conclude that:
\begin{equation}
S(\rho_A) + S(\rho_B) \geq S(\rho_{AB}).
\end{equation}
From property \ref{prop5dd}, it is easy to see that mutual information 
is zero for product state $\rho_{AB} = \rho_A\otimes\rho_B$. 
The mutual information of pure states will be equal to:
\begin{equation}
I(A:B)_{\psi_{AB}} = 2 S(\rho_A) = 2S(\rho_B),
\end{equation}
where $\psi_{AB} = \ketbra{\psi}{\psi}_{AB}$ is pure state. 

The quantum version of the relative entropy quantifies the distinguishability 
between quantum states. 
\begin{defi}[Quantum relative entropy]
Given two density matrices $\rho,\sigma\in\M{D}(\B{C}^N)$, the distinguishability between them can be 
quantified using the quantum relative entropy:
\begin{equation}
S(\rho||\sigma) = \Tr[\rho\log_2{\rho} -\rho\log_2{\sigma}]. 
\end{equation}
It will be zero if $\rho = \sigma$.
\end{defi}
The quantum relative entropy is a positive function for $\text{supp}(\rho)\subseteq \text{supp}(\sigma)$, otherwise 
it diverges to infinity. 
The quantum mutual information also can be written as a quantum relative 
entropy. 
\begin{prop}\label{prop666}
Consider a bipartite state $\rho_{AB}$, the following expression holds:
\begin{equation}
I(A:B)_{\rho_{AB}} = S(\rho_{AB}||\rho_A\otimes\rho_B),
\end{equation}
where $\rho_A$ and $\rho_B$ are the reduced states of $\rho_{AB}$.
\end{prop}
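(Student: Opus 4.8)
The plan is to compute $S(\rho_{AB}||\rho_A\otimes\rho_B)$ directly from the definition of the quantum relative entropy, taking $\sigma=\rho_A\otimes\rho_B$, and to show that the resulting expression collapses term by term onto the definition $I(A:B)_{\rho_{AB}}=S(\rho_A)+S(\rho_B)-S(\rho_{AB})$. Writing out the definition gives
\begin{equation}
S(\rho_{AB}||\rho_A\otimes\rho_B)=\Tr[\rho_{AB}\log_2\rho_{AB}]-\Tr[\rho_{AB}\log_2(\rho_A\otimes\rho_B)].
\end{equation}
The first term is immediately recognized as $-S(\rho_{AB})$, so the whole task reduces to showing that the second term equals $-S(\rho_A)-S(\rho_B)$.

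The key algebraic step I would establish is that the logarithm of a tensor product splits additively,
\begin{equation}
\log_2(\rho_A\otimes\rho_B)=\log_2\rho_A\otimes\id_B+\id_A\otimes\log_2\rho_B.
\end{equation}
I would prove this by spectral decomposition: writing $\rho_A=\sum_i a_i\ketbra{i}{i}$ and $\rho_B=\sum_j b_j\ketbra{j}{j}$, the product $\rho_A\otimes\rho_B$ is diagonal in the product basis with eigenvalues $a_i b_j$, and since $\log_2(a_i b_j)=\log_2 a_i+\log_2 b_j$ the claimed identity follows at once on each eigenspace.

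Substituting this decomposition and using linearity of the trace, the second term splits into $\Tr[\rho_{AB}(\log_2\rho_A\otimes\id_B)]+\Tr[\rho_{AB}(\id_A\otimes\log_2\rho_B)]$. Here I would invoke the standard partial-trace identity $\Tr_{AB}[\rho_{AB}(M_A\otimes\id_B)]=\Tr_A[\rho_A M_A]$, which holds because tracing out $B$ is adjoint to tensoring with $\id_B$; applying it with $M_A=\log_2\rho_A$ gives $\Tr_A[\rho_A\log_2\rho_A]=-S(\rho_A)$, and symmetrically the other term yields $-S(\rho_B)$. Collecting everything produces $S(\rho_{AB}||\rho_A\otimes\rho_B)=-S(\rho_{AB})+S(\rho_A)+S(\rho_B)$, which is exactly $I(A:B)_{\rho_{AB}}$.

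The only real subtlety, and the step I would treat most carefully, is the well-definedness of the relative entropy, \ie the support condition. One must check that $\text{supp}(\rho_{AB})\subseteq\text{supp}(\rho_A\otimes\rho_B)=\text{supp}(\rho_A)\otimes\text{supp}(\rho_B)$, so that no divergence arises and the manipulation of $\log_2(\rho_A\otimes\rho_B)$ on the support of $\rho_{AB}$ is legitimate. This inclusion is a general property of reduced states, and with it secured the computation above is entirely routine.
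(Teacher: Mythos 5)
Your proof is correct: the additive splitting $\log_2(\rho_A\otimes\rho_B)=\log_2\rho_A\otimes\id_B+\id_A\otimes\log_2\rho_B$ established by spectral decomposition, the partial-trace identity $\Tr[\rho_{AB}(M_A\otimes\id_B)]=\Tr_A[\rho_A M_A]$, and the check that $\text{supp}(\rho_{AB})\subseteq\text{supp}(\rho_A\otimes\rho_B)$ together give the standard and complete argument. The paper states this proposition without any proof, so there is nothing to compare against; your derivation is the canonical one and fills that gap.
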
 
In contrast with the von Neumman entropy, 
the relative entropy always decreases under the action of a quantum channel. 
This property has an operational meaning: 
two states are always less distinguishable under the 
action of noise. 
\begin{teo}
Given two density matrices $\rho,\sigma\in\M{D}(\B{C}_A)$ and a quantum channel 
$\Gamma\in\M{C}(\B{C}_A,\B{C}_B)$, the following inequality holds:
\begin{equation}
S(\rho||\sigma) \geq S(\Gamma(\rho)||\Gamma(\sigma))
\end{equation}
\end{teo}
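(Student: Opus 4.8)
The plan is to reduce the general statement to a small number of elementary facts by exploiting the Stinespring dilation of a quantum channel. First I would write any CPTP map $\Gamma\in\M{C}(\B{C}_A,\B{C}_B)$, in the same spirit as the Naimark dilation stated above, as an isometry followed by a partial trace: there is an environment space $\B{C}_E$ and an isometry $V:\B{C}_A\rightarrow\B{C}_B\otimes\B{C}_E$ with $V^\dagger V=\id_A$ such that $\Gamma(\omega)=\Tr_E(V\omega V^\dagger)$ for every $\omega\in\M{D}(\B{C}_A)$. With this decomposition the map factors into two steps — conjugation by $V$ and tracing out $E$ — so it is enough to control the behaviour of the relative entropy under each of them separately.

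The conjugation step is harmless. Since $V$ is an isometry it preserves the nonzero spectrum and maps the support of $\omega$ isometrically, so the functional calculus applied directly to the definition $S(\rho||\sigma)=\Tr[\rho\log_2\rho-\rho\log_2\sigma]$ gives $S(V\rho V^\dagger||V\sigma V^\dagger)=S(\rho||\sigma)$. The whole content of the theorem is therefore concentrated in the second step, the \emph{monotonicity under partial trace}:
\begin{equation}
S(\rho_{BE}||\sigma_{BE})\geq S(\Tr_E\rho_{BE}||\Tr_E\sigma_{BE}).
\end{equation}
This I expect to be the main obstacle, and it is exactly where the non-commutativity of $\rho$ and $\sigma$ genuinely bites.

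To prove the partial-trace inequality I would deduce it from the \emph{joint convexity} of the relative entropy, $S\de{\sum_i p_i\rho_i||\sum_i p_i\sigma_i}\leq\sum_i p_i S(\rho_i||\sigma_i)$, together with a unitary-averaging trick. Averaging $\omega_{BE}$ over the Heisenberg–Weyl unitaries $\{\id_B\otimes W_j\}_j$ acting on $E$ (with $d_E$ the dimension of $\B{C}_E$) implements the replacement $\omega_{BE}\mapsto\Tr_E(\omega_{BE})\otimes\id_E/d_E$; applying joint convexity to this average, using the invariance of $S$ under each $\id_B\otimes W_j$ shown in the previous step, and the additivity of the relative entropy (the analogue of property \ref{prop5dd}, which removes the maximally mixed factor because $S(\id_E/d_E||\id_E/d_E)=0$), yields the desired inequality. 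What remains is the joint convexity itself, which is equivalent to \emph{Lieb's concavity theorem} for the function $(A,B)\mapsto\Tr(X^\dagger A^tXB^{1-t})$; this is the single genuinely hard ingredient, and I would either establish it through operator-convexity / operator-mean arguments or import it as a known result, since the scalar concavity of the von Neumann entropy (property \ref{item.3}) is by itself too weak to reach it.
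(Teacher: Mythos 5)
Your proposal cannot be matched against an argument in the paper, because the paper states this theorem without proof: the monotonicity of the relative entropy under CPTP maps is quoted there as a known result (it is the data-processing inequality of Lindblad and Uhlmann). Judged on its own, your outline is the standard and correct route. The reduction via the Stinespring dilation $\Gamma(\omega)=\Tr_E(V\omega V^{\dagger})$ is sound; the invariance $S(V\rho V^{\dagger}||V\sigma V^{\dagger})=S(\rho||\sigma)$ is correct with the usual convention that $\log_2$ is evaluated on the support (and the whole inequality is trivial when $\text{supp}(\rho)\not\subseteq\text{supp}(\sigma)$, since the left side is then $+\infty$); and the Heisenberg--Weyl twirl $\frac{1}{d_E^{2}}\sum_{j}(\id_B\otimes W_j)\,\omega_{BE}\,(\id_B\otimes W_j)^{\dagger}=\Tr_E(\omega_{BE})\otimes\id_E/d_E$, combined with joint convexity, unitary invariance, and additivity against the factor $\id_E/d_E$, does deliver monotonicity under the partial trace. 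The one caveat is that all of the genuine difficulty is deferred to joint convexity, i.e.\ to Lieb's concavity theorem, which you import rather than establish; that is a legitimate way to organize the proof (it is how the original Lindblad--Uhlmann arguments and the modern textbook treatments proceed), but it should be presented as a citation to a named theorem rather than as an ``either prove or import'' alternative, since the operator-convexity proof is itself a substantial piece of work. A minor precision: joint convexity \emph{follows from} Lieb's concavity of $(A,B)\mapsto\Tr(X^{\dagger}A^{t}XB^{1-t})$ via Lindblad's argument; calling the two statements equivalent overstates the logical relationship, though nothing in your chain of implications depends on that.
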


This theorem implies into another property of the quantum mutual information:  
it decreases monotonically under local CPTP channels. As mutual information quantifies 
correlations, this means that the amount of 
correlations reduce under local noise. 
\begin{cor}
Given a bipartite state $\rho_{AB}\in\M{D}(\B{C}_A\otimes\B{C}_B)$ 
and quantum channel $\Phi_B\in\M{C}(\B{C}_B,\B{C}_{B'})$, the mutual information 
satisfies:
\begin{equation}
I(A:B)_{\rho_{AB}} \geq I(A:B')_{\id\otimes\Phi(\rho_{AB})}.
\end{equation}
\end{cor}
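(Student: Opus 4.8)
The plan is to derive this monotonicity (the data-processing inequality for mutual information) entirely from two ingredients already in hand: the relative-entropy representation of the mutual information in Proposition~\ref{prop666}, and the monotonicity of the quantum relative entropy under CPTP maps stated immediately above. The conceptual picture is that $I(A:B)$ measures how far $\rho_{AB}$ sits from the product of its marginals, and that pushing everything through a local channel cannot increase this distinguishability.

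First I would rewrite the left-hand side with Proposition~\ref{prop666} as $I(A:B)_{\rho_{AB}} = S(\rho_{AB}||\rho_A\otimes\rho_B)$. The key observation is that $\Gamma := \id_A\otimes\Phi_B$ is itself a CPTP map in $\M{C}(\B{C}_A\otimes\B{C}_B,\B{C}_A\otimes\B{C}_{B'})$, being the tensor product of the identity channel with a CPTP map. I would then apply the monotonicity theorem with this $\Gamma$, the state $\rho_{AB}$ and the product $\rho_A\otimes\rho_B$, obtaining
\begin{equation}
S(\rho_{AB}||\rho_A\otimes\rho_B) \geq S\bigl(\id_A\otimes\Phi_B(\rho_{AB})\,||\,\id_A\otimes\Phi_B(\rho_A\otimes\rho_B)\bigr).
\end{equation}
Since the channel touches only the second factor, the second argument simplifies to $\id_A\otimes\Phi_B(\rho_A\otimes\rho_B)=\rho_A\otimes\Phi_B(\rho_B)$, which is already a product state.

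The remaining, and most delicate, step is to recognize this right-hand side as the output mutual information, which requires checking that $\rho_A\otimes\Phi_B(\rho_B)$ is precisely the product of the \emph{marginals} of the output state $\id_A\otimes\Phi_B(\rho_{AB})$. For the $A$-marginal I would trace out $B'$ and use that $\Phi_B$ is trace preserving, giving $\Tr_{B'}[\id_A\otimes\Phi_B(\rho_{AB})]=\Tr_B[\rho_{AB}]=\rho_A$; for the $B'$-marginal I would use that the partial trace over $A$ commutes with the action of $\Phi_B$ on the complementary factor, giving $\Tr_A[\id_A\otimes\Phi_B(\rho_{AB})]=\Phi_B(\rho_B)$. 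A second application of Proposition~\ref{prop666} then identifies $S\bigl(\id_A\otimes\Phi_B(\rho_{AB})\,||\,\rho_A\otimes\Phi_B(\rho_B)\bigr) = I(A:B')_{\id_A\otimes\Phi_B(\rho_{AB})}$, and chaining the displays yields the claim. I expect this marginal bookkeeping to be the only real obstacle: everything hinges on trace preservation of $\Phi_B$ ensuring that feeding the product of marginals through the local channel reproduces the product of the output marginals, after which the inequality follows at once from monotonicity.
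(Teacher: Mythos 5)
Your proposal is correct and follows essentially the same route as the paper: rewrite $I(A:B)$ as $S(\rho_{AB}||\rho_A\otimes\rho_B)$ via Proposition~\ref{prop666}, apply monotonicity of relative entropy under the local CPTP map $\id_A\otimes\Phi_B$, and identify the result as the output mutual information. Your explicit check that $\rho_A\otimes\Phi_B(\rho_B)$ equals the product of the output marginals is a detail the paper leaves implicit, but the argument is the same.
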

\begin{proof}
Given the mutual information:
\begin{equation}
I(A:B)_{\rho_{AB}} = S(\rho_{AB}||\rho_A\otimes\rho_B) 
\end{equation}
using the theorem above:
\begin{equation}
I(A:B)_{\rho_{AB}} \geq S(\id_A\otimes\Phi_B(\rho_{AB})||\rho_A\otimes\Phi_B(\rho_B)) = I(A:B')_{\id\otimes\Phi(\rho_{AB})}. 
\end{equation}
\end{proof}

Analogous to the classical conditional  entropy, it is possible to define a quantum version of it. 
For a bipartite system $\rho_{AB}$, the quantum conditional entropy quantifies the amount of information of $A$ that is available when $B$ is known. 
\begin{defi}[Conditional entropy]
Consider a bipartite system $\rho_{AB}$, the quantum conditional entropy is defined 
as the function:
\begin{equation}
S(A|B)_{\rho_{AB}} = S(\rho_{AB}) -S(\rho_B).
\end{equation}
\end{defi} 
One interesting property of the quantum conditional entropy is that 
it can be negative. For example, if we consider a bipartite pure state 
$\ket{\phi}_{AB} = (\ket{00}+\ket{11})/\sqrt{2}$, von Neumann entropy 
of the pure state is zero: $S(\ketbra{\phi}{\phi}_{AB}) =0$. 
Nonetheless the reduced state  is the maximally mixed state: $\rho_B = \id/2$, 
whose von Neumann entropy is $S(\id/2) = 1$. 
Therefore the conditional entropy of this state is negative 
$S(A|B)_{\ketbra{\phi}{\phi}_{AB})} = -1$. 
The negative value of the quantum conditional entropy is defined as the {\it coherent information}:
\begin{equation}
I(A\rangle B) = - S(A|B). 
\end{equation}
The conditional entropy has an operational meaning in the state merging protocol,  
where a tripartite pure state is shared by two experimentalists, one will send part of its 
state through a quantum channel to the other. The coherent information 
quantifies the amount of entanglement required to the sender be able to perform 
the protocol. If it is positive, they cannot use entanglement to perform the state merging, 
and in the end the  amount of entanglement grows \cite{esm,dattasm,sm}. 
The coherent information 
also quantifies the capacity of a quantum channel, optimizing over all input states 
$\rho_A$, the output state is known to be $\rho_B$.  
This result is named as {\it LSD theorem} 
(\cite{devetaklsd,lloyd97,shor02}, apud \cite{wilde}). 

\section{Quantum Correlations}\label{q.c.}
\subsection{Entanglement}
This section introduces the concept of quantum entanglement, presenting its  characterization and quantification. 
\subsubsection{Separable States}
Consider two systems $A$ and $B$, often named the experimentalists responsible 
by the systems as Alice and Bob respectively. The state of the systems $A$ and $B$ is 
described by a density matrix on a Hilbert space. In this way considering two 
finite Hilbert spaces $\B{C}_A$ and $\B{C}_B$, and a basis in each one: 
\begin{eqnarray}
&&\{\ket{a_i} \}_{i=0}^{|A|-1}\in\B{C}_A; \\
&&\{\ket{b_k} \}_{k=0}^{|B|-1}\in\B{C}_B,
\end{eqnarray} 
where $|A| = \text{dim}(\B{C}_A)$ and $|B| = \text{dim}(\B{C}_B)$. The global system, composed 
by $A$ and $B$, can be obtained through the tensor product between the basis in the Hilbert space 
of each system:
\begin{equation}\label{t1}
\{\ket{a_i,b_k}\}_{i,j=0}^{|AB|-1} = \{\ket{a_i}\otimes\ket{b_k}\}_{i,k=0}^{|A|-1,|B|-1}, 
\end{equation} 
hence the dimension of the composed system is the product of the dimension: 
$|AB|=\text{dim}(\B{C}_{AB}) = \text{dim}(\B{C}_{A})\cdot\text{dim}(\B{C}_{B})$. The  Hilbert space 
of the composed system is denoted as 
$\B{C}_{AB} = \B{C}_{A}\otimes\B{C}_{B}$. A pure state of the 
composed system can be decomposed in the basis in Eq.\ref{t1}:
\begin{equation}\label{t2}
\ket{\psi}_{AB} = \sum_{i,k} c_{i,k}\ket{a_i}\otimes\ket{b_k}. 
\end{equation} 
From this expression one can realize that: in general a pure state, that describes a composed 
system, cannot be written as the product of the state of each system. In other words, suppose the 
system $A$ and $B$ described by the states $\ket{\alpha}_A=\sum_i a_i\ket{a_i}\in\B{C}_A$ and 
$\ket{\beta}_B =\sum_k b_k\ket{b_k} \in\B{C}_B$. 
The composed system is described by the state:
\begin{equation}\label{t3}
\ket{\alpha}\otimes\ket{\beta} = \sum_{i,k} a_i b_k \ket{a_i}\otimes\ket{b_k}.
\end{equation}
It is the particular case where the coefficients in Eq.\eqref{t2} are $c_{i,k} = a_i\cdot b_k$. 
If a composed system can be written as Eq.\eqref{t3} it is called a {\it product state}, and 
there is no correlations between $A$ and $B$. It can be checked easily via the mutual 
information of the state, which is clearly zero once that the von Neumman entropy of the 
pure state is zero. In the pure states subspace, the set of product states is a 
tiny set (\cite{popescu94,popescu95}, apud \cite{zyc98})

The concept of product state can be generalized for 
mixed state. Considering a composed system represented by the state 
$\rho_{AB}\in\M{D}(\B{C}_A\otimes\B{C}_B)$, it is called a {\it product state} if can 
be written as:
\begin{equation}
\rho_{AB} = \rho_A\otimes\rho_B,  
\end{equation} 
where $\rho_A\in\M{D}(\B{C}_A)$ and $\rho_B\in\M{D}(\B{C}_B)$ are the states of the system $A$ and $B$ 
respectively. The product state for mixed states is also no correlated, as its mutual information is zero. 
As the space of quantum states is a convex set, the convex combination of states 
will also be a quantum state. The convex combination of product states generalizes the notion of product states, that is named a 
{\it separable state} \cite{werner89}. 
\begin{defi}[Separable states]
Considering a composed system described by the state\\ 
$\sigma\in\M{D}(\B{C}_{A}\otimes\B{C}_{B})$, it is a  
{\it separable state} if and only if can be written as:
\begin{equation}
\sigma = \sum_{i,j}p_{i,j}\sigma_{i}^A\otimes\sigma_{j}^B,
\end{equation} 
where $\sigma_i^A\in\M{D}(\B{C}_A)$ and $\sigma_j^B\in\M{D}(\B{C}_B)$.
\end{defi}
The set of quantum channels that let separable states invariant is named 
{\it local operations and classical communication} (LOCC). The set of separable 
states form a subspace in the space of density matrices,  it can be denoted  
as $Sep(\B{C}_{AB})$.  The separable state can be easily extended to multipartite systems. Considering a 
$n$-partite system it is named $m$-separable if it can be decomposed in a convex 
combination of product states composed by $m$ parties.

\subsubsection{Entanglement Quantification}
A measure of entanglement for mixed state can be obtained from the 
quantification of entanglement for pure states. It is possible construct a measure of 
entanglement in this sense calculating the average of entanglement taken on 
pure states needed to form the state. The most famous measure which follow this 
idea is named {\it entanglement of formation}. The entanglement 
of formation is interpreted as the minimal pure-states entanglement required to build 
the mixed state \cite{bennettpra}.
\begin{defi}
Considering a quantum state $\rho\in\M{D}(\B{C}_{A}\otimes\B{C}_{B})$, 
the entanglement of formation is defined as:
\begin{equation}
E_f(\rho) = \min_{\xi_{\rho}}\sum_i p_i E(\ket{\psi_i}),
\end{equation}
where the optimization is performed over all ensembles $\xi_{\rho}=\{p_i,\ketbra{\psi_i}{\psi_i}\}_{i=1}^M$, 
such that $\rho = \sum_i p_i \ketbra{\psi_i}{\psi_i}$, $\sum_i p_i =1$ and $p_i\geq 0$. 
\end{defi}
The entanglement entropy $E(\ket{\psi_i})$ is defined as:
\begin{equation}
E(\ket{\psi_i})  = S(\Tr_B[\ketbra{\psi_i}{\psi_i}]),
\end{equation}
where $S(\Tr_B[\ketbra{\psi_i}{\psi_i}])$ is the von Neumann entropy of 
the reduced state of $\ket{\psi_i}$. 
The entanglement of formation is not easy to evaluate. Indeed the problem is 
related to find the minimal convex hull to form $\rho$ in function of a nonlinear function.
 For two qubits systems it can be calculated analytically \cite{wooters98}.

Quantum entanglement also enables an operational interpretation. This 
interpretation has two different ways: the resource required to 
construct a given quantum state, and the resource extracted from a quantum system. 
The resource here refers to the amount of copies of maximally mixture states. 
Then one can define the measure of this resources as a measure of entanglement in the limit of many copies. 

The number o copies $m$ of maximally entangled states required to construct 
$n$ copies of a given state $\rho$, by means of LOCC protocols, 
is named {\it entanglement cost} \cite{bennettpra}. 
The entanglement cost 
can be written as the regularized version of the entanglement of formation \cite{hayden01}. 
\begin{defi}[Entanglement cost]
The number of copies of the maximally entangled states required to build the 
state $\rho$ is given by:
\begin{equation}
E_C(\rho) = \lim_{n\rightarrow \infty} \frac{ E_{f}(\rho^{\otimes n})}{n},
\end{equation}
where $E_{f}(\rho^{\otimes n})$ is the entanglement of formation of 
the $n$ copies of $\rho$.
\end{defi} 

The number of copies $m$ of the maximally entangled state which 
can be extract from $n$ copies of a given state $\rho$,  by 
LOCC, is named {\it distillable entanglement} \cite{bennettpra}.
\begin{defi}[Distillable entanglement]
The distillable entanglement of a given state $\rho$ is defines as:
\begin{equation}
E_D(\rho) = \lim_{n\rightarrow \infty} \frac{m}{n},
\end{equation}
where $m$ is the number of maximally entangled states that can be extracted from 
$\rho$ in the limit of many copies. 
\end{defi}
The distillable entanglement is a very important operational measure of entanglement, 
because it quantifies how useful is a given quantum 
state, for the quantum information purpose. 

The operational meaning of the entanglement cost and the distillable entanglement 
compose the research theory o quantum entanglement. 
The entanglement 
cost and the distillable entanglement of a given state are not the same. Indeed the 
cost of entanglement is greater than the distillable entanglement. The point is: 
it is more expensive create a state $\rho$ with copies of maximally entangled state, than is possible  
to extract entanglement from $\rho$.  One example are the bound entangled state, even it is entangled 
it is not possible to extract any maximally mixture state, although it requires an amount of maximally entangled states to build it.

\subsection{Quantumness of Correlations}
\label{quantumness.disting}
This section presents a revision about some basic concepts of quantumness of correlations for distinguishable systems. The notion of 
classically correlated states and quantum discord are presented.  
\subsubsection{Classically   Correlated states}
Consider a flip coin game  
with two distinct events described by the 
states $\{\ketbra{0}{0},\ketbra{1}{1}\}$, each with the same probability 
$1/2$. It known that is possible to distinguish the faces of the coin, 
with a null probability of error. The probability of error to distinguish 
two events, or two probability distributions, depends on the trace distance 
of the probability vectors of the events:
\begin{equation}
P_E(\ketbra{0}{0},\ketbra{1}{1}) = \frac{1}{2} - \frac{1}{4}|| \ketbra{0}{0} - \ketbra{1}{1}||_1,  
\end{equation} 
as the states are orthogonal $|| \ketbra{0}{0} - \ketbra{1}{1}||_1=2$, therefore the probability 
of error $P_E(\ketbra{0}{0},\ketbra{1}{1})=0$, as one expected.  
Now suppose a quantum coin flip, described by the events 
$\{\ketbra{\phi}{\phi},\ketbra{\psi}{\psi}\}$, with equal probability 
$1/2$, where $\ket{\phi},\ket{\psi}\in \B{C}^2$. As an example, consider the 
states $\ket{\phi} = (\ket{0}+\ket{1})/\sqrt{2}$ and $\ket{\psi} = \ket{1}$. 
For this case the overlap is $\braket{\phi}{\psi}=1/\sqrt{2}$. The trace distance 
of these states is simply: \[|| \ketbra{\phi}{\phi} - \ketbra{\psi}{\psi}||_1=\sqrt{2}, \] 
then the probability of error to distinguish the events is not zero. Superposition of states in quantum mechanics creates events 
that cannot be perfectly distinguished. 
The distinguishability of quantum or classical events,  can be quantifier by the 
Jensen-Shannon divergence. 
For two probability distributions (or events)  
it is defined as the symmetric and smoothed version of 
the Shannon relative entropy, or in the quantum case the von Neumman 
relative entropy \cite{majteyjensen,jensenentropy}. 
\begin{defi}
The Jensen-Shannon divergence for 
two arbitrary events $\ket{\psi},\ket{\phi}$ is defined as:
\begin{equation}
J(\ket{\psi},\ket{\phi}) = \frac{1}{2}
S\left(\frac{\ketbra{\phi}{\phi}+\ketbra{\psi}{\psi}}{2}\Vert\ketbra{\phi}{\phi}\right)+
\frac{1}{2}
S\left(\frac{\ketbra{\phi}{\phi}+\ketbra{\psi}{\psi}}{2}\Vert\ketbra{\psi}{\psi}\right).
\end{equation}
\end{defi} 
For the classical coin flip game the Jensen-Shannon divergence will be just 
$J(\ket{0},\ket{1}) = 1$. On the other hand, for the quantum coin flip with 
states $\ket{\phi} = (\ket{0}+\ket{1})/\sqrt{2}$ and $\ket{\psi} = \ket{1}$, 
it will be $J(\ket{\phi},\ket{\psi}) = \sqrt{2}$. The Jensen-Shannon divergence 
is related to the Bures distance and induces a metric for pure quantum states related to the 
Fisher-Rao metric \cite{jensenmetric},  it is lager for more distinguishable events, and the largest distance characterizes complete 
distinguishable events.
The Jensen Shannon divergence  
for two arbitrary events 
$\ket{\psi},\ket{\phi}$ is related to the mutual information \cite{majteyjensen}:
\begin{equation}
J(\ket{\psi},\ket{\phi}) = I(R:E)_{\rho_{RE}},
\end{equation}
where $R$ represents a register, $E$ represents the events and 
$\rho_{RE}\in\M{D}(\B{C}_R\otimes\B{C}_E)$  characterizes the existence of two distinct events:
\begin{equation}
\rho_{RE} = \frac{1}{2}\ketbra{0}{0}_R\otimes\ketbra{\phi}{\phi}_E +
\frac{1}{2}\ketbra{1}{1}_R\otimes\ketbra{\psi}{\psi}_E. 
\end{equation}   
For the classical coin flip game it is:
$\rho^c_{RE} = \frac{1}{2}\ketbra{0}{0}\otimes\ketbra{0}{0} +
\frac{1}{2}\ketbra{1}{1}\otimes\ketbra{1}{1}$, with mutual information 
 $I(R:E)_{\rho^c_{RE}} = 1$. For the quantum coin   
the state will be  $\rho^q_{RE} = \frac{1}{2}\ketbra{0}{0}\otimes\ketbra{\phi}{\phi} +
\frac{1}{2}\ketbra{1}{1}\otimes\ketbra{\psi}{\psi}$, where for 
$\ket{\phi} = (\ket{0}+\ket{1})/\sqrt{2}$ and $\ket{\psi} = \ket{1}$, and the mutual 
information is $I(R:E)_{\rho^q_{RE}} = \sqrt{2}$. 
As the mutual information is a measure of correlations between two probability 
distributions, one realizes that there are more correlations between the 
register and the events when they are not completely distinguishable than 
in the completely distinguishable case. 
However 
two binary classical distributions cannot share more then one bit of information, in other 
words, their mutual information cannot be greater than one \cite{wilde}. As the correlations 
between the quantum coin events and the register are bigger than one, it means 
that there are correlations beyond the classical case. A quantum state is 
classically correlated if there exists a local projective measurement such that the 
state remains the same \cite{opp,henderson,zurek2001}. The state $\rho^c_{RE}$ is an example 
of \emph{ classical-classical state}. In general these states are defined as:
\begin{defi}[Classical-Classical States]\label{defi2}
Given a bipartite state $\rho_{AB}\in\M{D}(\B{C}_A\otimes\B{C}_B)$, it is strictly classically 
correlated (or classical-classical state) if there exists a local projective measurement  $\Pi_{AB}$  with elements 
$\{\Pi_l^A\otimes \Pi_k^B\}_{k,l}$ such that the post-measured state is equal 
to the input state:
\begin{equation}
\Pi(\rho_{AB}) = \sum_{k,l}\Pi_l^A\otimes \Pi_k^B \rho_{AB} \Pi_l^A\otimes \Pi_k^B = 
\rho_{AB},
\end{equation}
therefore $\rho_{AB} = \sum_{k,l}p_{k,l}\Pi_l^A\otimes \Pi_k^B$,  and $\Pi_x^Y = \ketbra{e_x}{e_x}_Y$ is a 
projetor in the ortonormal basis $\{\ket{e_x}_Y\}_x\in\M{H}_Y$.
\end{defi}
The state $\rho^q_{ER}$ is an example of a \emph{classical-quantum state}, because there exists a projective 
measurement, with elements $\{\ketbra{0}{0}, \ketbra{1}{1}\}$, 
over partition $E$ that keep the state unchanged. On the other hand, there is not a projective 
measurement over partition $R$ with this property. 
In general  a state $\rho_{AB}$ is classical-quantum if there exists a projective measurement
$\Pi_A$ with elements $\{ \Pi_k\}_k$ such that:
\begin{equation}\label{nondisc}
\Pi_A\otimes\id_B (\rho_{AB}) = \rho_{AB} = \sum_{k}p_k \Pi_k \otimes \rho_k. 
\end{equation}
As the set of classically correlated states is composed by block diagonal matrices  
it is not convex, once that 
combination of block diagonal matrices cannot be block diagonal. 
As the identity matrix is block diagonal, or just diagonal, 
this set is connected by the maximally mixed state, and it is a \emph{thin set }  \cite{ferraro}. 

\subsubsection{Quantum Discord} 
The amount of classical correlations in a quantum state is measured by the capacity 
to extract information locally \cite{fanchini2012}. As the measurement process is a classical 
statistical inference, classical correlations can be quantified  by the amount of correlations that are not destroyed by the local measurement. 

\begin{defi}
For a bipartite density matrix $\rho_{AB}\in\M{D}(\B{C}_A\otimes\B{C}_B)$, the classical 
correlations between $A$ and $B$ can be quantified by the amount of correlations 
that can be extracted via local measurements:
\begin{equation}\label{f1}
J(A:B)_{\rho_{AB}}= \max_{\id\otimes \M{B}\in\M{P}}{I(A:X)_{\id\otimes \M{B}(\rho_{AB})}}=\max_{\id\otimes \M{B}\in\M{P}} 
\left\{ S(\rho_A) - \sum_x p_x S(\rho_x^A) \right\},  
\end{equation}
where the optimization is taken over the set of local measurement maps 
$\id\otimes \M{B}\in\M{P}(\M{H}_{AB},\M{H}_{AX})$ and  
$\id\otimes \M{B}(\rho_{AB}) = \sum_x p_x \rho_{x}^{A}\otimes\ketbra{b_x}{b_x}$ 
is a quantum-classical state in the space $\M{B}(\M{H}_A\otimes\M{H}_X)$. 
\end{defi}

Originally H. Ollivier and W. Zurek \cite{zurek2001} have defined this expression 
restricting the optimization to projective measurements. Independently, 
L. Henderson and V. Vedral \cite{henderson} have defined the optimization of the 
classical correlations over general POVMs. 
As the mutual information quantifies the total amount of 
correlations in the state, it is possible to define a quantifier of quantum 
correlations as  
the difference between the total correlations in the system, 
quantified by mutual information, and  the classical correlations, 
measured by Eq.\eqref{f1}. This measure of quantumness of correlations 
is named \emph{quantum discord}:
\begin{defi}
The quantum discord $D(A:B)_{\rho_{AB}}$ of a state $\rho_{AB}$ is defined as:
\begin{equation}\label{f2}
D(A:B)_{\rho_{AB}} = I(A:B)_{\rho_{AB}} - J(A:B)_{\rho_{AB}},
\end{equation}
where $I(A:B)_{\rho_{AB}}$ is the von Neumann mutual information. 
\end{defi} 
Quantum discord quantifies the amount of information,  
that cannot be accessed via local measurements. Therefore 
it measures the quantumness shared between $A$ and $B$ 
that cannot be recovered via a classical statistical inference process. 
The optimization of quantum discord is a NP-hard problem \cite{huang13}. 
A general analytical solution for quantum discord is not known, nor a 
criterion for a giving POVM to be optimal. 
Nonetheless there are some analytic expressions for some specific 
states \cite{mazhar,girolami2011,lu}.  
It is natural a generalization of quantum discord 
for the case the measurement is performed locally on 
both subsystems. 
\begin{defi}
Given a bipartite state $\rho_{AB}\in\M{D}(\B{C}_A\otimes\B{C}_B)$ the quantum discord 
over measurements on both systems is:
\begin{equation}
D(A:B)_{\rho_{AB}} = \min_{\M{A}\otimes\M{B}\in\M{P}}
\left\{I(A:B)_{\rho_{AB}} - I(A:B)_{\M{A}\otimes\M{B}(\rho_{AB})} \right\},
\end{equation}
where $\M{A}\in \M{P}(\B{C}_{A},\B{C}_{Y})$ and $\M{B}\in \M{P}(\B{C}_{B},\B{C}_{X})$.
\end{defi}
This generalization of quantum discord was first discussed in \cite{piani08} in 
the context of the non-local-broadcast theorem. This definition is often named WPM-discord, 
because it was also studied by S. Wu \emph{et. al} \cite{wu09}. It was also 
studied restricting  to projective measurements by some 
authors \cite{girolami11,rulli11}.

\subsubsection{Relative entropy of quantumness and work deficit}\label{sec.req}
For a given dephasing channel $\Pi\in\M{P}(\B{C}^N)$, acting on any state $\rho\in\M{D}(\B{C}^N)$, the 
support of the dephased state contains the support of the input state: 
$\text{supp}(\rho)\subseteq \text{supp}(\Pi[\rho])$, therefore the measure of quantumness of 
correlations based on the relative entropy remains finite for every composed state \cite{zyc,wilde}.

Suppose Alice and Bob have a common composed system described by the state 
$\rho_{AB}\in\M{D}(\B{C}_A\otimes\B{C}_B)$ and they would like to extract 
work from this system. To accomplish their task, they can 
perform the {\it closed set of local operations and  classical communication} (CLOCC). 
This class of operations is composed by: {\it i)} addition of pure ancillas, 
{\it ii)} local unitary operations, and {\it iii)} local dephasing channels. 
Sending states through a local dephasing channel to represent the 
classical communication. If Alice and Bob are together in the same lab, 
they can extract work globally from the total system, then the total 
amount of information that Alice and Bob can extract from $\rho_{AB}$ together 
is defined as the {\it total work} \cite{opp}.
\begin{defi}
The work that can be extracted from a quantum system, described by the state $\rho\in\M{D}(\B{C}^N)$, 
is defined as the change in the entropy:
\begin{equation}\label{f11}
W_t(\rho) = \log_2{N} - S(\rho), 
\end{equation}  
$\log_2{N}$ is the entropy of the maximally mixed state, and $S(\rho)$ is the von Neumann entropy of the state. 
\end{defi} 
This function can be interpreted as a quantifier of information, such that if the 
state is a maximally mixed state no information can be extracted from it. Therefore 
if the state is a pure state we have the maximum amount of information \cite{opp,horodecki2005local}.  
The entropy function represents 
the amount of information that one {\it can get to know} about the system, therefore the function Eq.\eqref{f11} 
represents the amount of information that one {\it already knows}. 
On the other hand, if Alice and Bob cannot be in the same lab, the information  
that can be extracted from the total state is restricted to be locally accessed. In the same 
way it is possible to define the total information, named {\it local work}. 
Then Alice and Bob should perform CLOCC operation in order to obtain the maximal amount 
of local information \cite{horodecki2005local}:
\begin{equation}\label{f12}
W_l(\rho_{AB}) = \log_2{N} - \sup_{\Gamma\in CLOCC}S(\Gamma[\rho_{AB}]),
\end{equation}
where the state $\Gamma(\rho_{AB})$ is the state after the protocol. 
As CLOCC consist in send one part of the state in a dephasing channel, at the end of the protocol, the whole state is with the receiver: $\Gamma(\rho_{AB}) = \rho_{AA\apos}$.  
  
One can be interested in the amount of information that cannot be extracted locally by Alice and Bob. 
This function is named {\it work deficit} and it quantifies the amount of work that is not possible 
to extract locally \cite{opp}. 
\begin{defi}
Given a bipartite state $\rho_{AB}$, the information which two parts Alice and Bob cannot 
access, via CLOCC, is the work deficit:
\begin{equation}
\Delta(\rho_{AB}) = W_t(\rho_{AB}) - W_l(\rho_{AB}).
\end{equation}
\end{defi}
From the definition of the total work and the local work we can define the work deficit as the diference of them:
\begin{equation}\label{f13}
\Delta(\rho_{AB}) = \inf_{\Gamma\in CLOCC}\left\{ S(\Gamma[\rho_{AB}]) - S(\rho_{AB})\right\}.
\end{equation}
Even though the total and the local work depend explicitly on the dimension of the 
system, the work deficit should not depend on the dimension of $\Gamma[\rho_{AB}]$. 
Adding local pure ancillas belongs to the CLOCC cannot 
change the amount of work deficit. 
The work deficit 
can quantify quantum correlations, then it must not change by the simple addition 
of a uncorrelated system \cite{horodecki2005local,modiprl2010}. 

In the asymptotic limit (the limit of many copies) the work deficit quantifies the 
amount of pure states that can be extracted locally \cite{local,devetakdistillation}. 
However as a resource cannot be created freely, the addition of pure local ancillas 
is not allowed, then it is replaced by the addition of maximally mixed states. 
The set of operations that contains: {\it i)} addition of maximally mixture states, 
{\it ii)} local unitary operations, {\it iii)} local dephasing channels, is named 
{\it noise local operations and classical communication} (NLOCC) \cite{local}. 
The extraction of local pure states is a protocol, whose goal is to extract resource (coherence).  
The set of available operations 
are NLOCC operations, and the set of free resource states is composed only by 
the maximally mixture state. It is the only state without local purity \cite{michalopp2013}. 
It remains an open question if the CLOCC class and the NLOCC class are equivalent classes \cite{horodecki2005local}. 

In the limit of one copy, the work deficit can quantify quantum correlations present in 
a given composed system \cite{oppenheim2002new}. 
The scenario where Alice and Bob can perform many steps of 
classical communication one each other is named {\it two way}, and the work deficit is named 
{\it two-way work deficit}. In this case they can perform measurements and communicate 
 in each step of the protocol.
Mathematically the two-way work deficit does not have  
a closed expression \cite{horodecki2005local}. 
As discussed above, it is possible to activate  
quantum correlations performing operations on the measured system.  
Therefore this many steps scenario cannot quantify quantum correlations.  
Because if Alice and 
Bob can implement a sequence of non commuting dephasing channels, the only invariante state is the maximally mixed state. 
In this way, it is necessary a one round description, where 
 Alice and Bob can communicate in the end of the protocol. 
Following this idea, it is possible to define  the 
{\it one way work deficit}, which just one side can communicate. If Bob 
communicates to Alice, the state 
created in the end of the protocol is a quantum-classical state (or a classical-quantum 
state if Alice communicates in the end of the protocol).

\begin{defi}[one way work deficit]
Given a bipartite state $\rho_{AB}$, the work deficit with just one side communication 
is named {\it one way work deficit} \cite{opp}:
\begin{equation}
\Delta^{\rightarrow}(\rho_{AB}) = \min_{\Pi_{B}\in\M{P}}\left\{ S(\id_A\otimes\Pi_B[\rho_{AB}]) 
- S(\rho_{AB})  \right\},
\end{equation}
where $\Pi_B\in\M{P}(\B{C}_B)$ is a local dephasing on subsystem $B$.
The notation $\Delta^{\rightarrow}(\rho_{AB})$  means that 
the communication is from $A$ to $B$, and $\Delta^{\leftarrow}(\rho_{AB})$  in the opposite direction.
\end{defi}

Another definition for the work deficit is defined when both Alice and Bob communicate in the end of 
the protocol, this is named {\it zero way work deficit}. The state created in the end of 
the protocol is a classical-classical state.

\begin{defi}[zero way work deficit]
Given a bipartite state $\rho_{AB}$, the work deficit with no communication until the end of the protocol  
is named {\it zero work deficit} \cite{opp}:
\begin{equation}
\Delta^{\emptyset}(\rho_{AB}) = \min_{\Pi_A\otimes\Pi_{B}\in\M{P}}\left\{ S(\Pi_A\otimes\Pi_B[\rho_{AB}]) 
- S(\rho_{AB})  \right\},
\end{equation}
where $\Pi_A\otimes\Pi_B\in\M{P}(\B{C}_A\otimes \B{C}_B)$ is a local dephasing on subsystem $A$ and $B$. 
\end{defi}

In analogy with the work deficit, K. Modi {\it et. al} proposed  a measure of quantumness of correlation defined 
as the relative entropy of the state and the set of classical correlated states \cite{modiprl2010}. 
This measure is named {\it relative entropy of quantumness}. 
\begin{defi}[Relative entropy of quantumness]
The relative entropy of quantumness $D(\rho_{AB})_{QC}$ 
for a given state $\rho_{AB}$ is defined as the minimum 
relative entropy over the set of quantum-classical states \cite{modiprl2010}:
\begin{equation}\label{f10}
D(\rho_{AB})_{QC} = \min_{\xi_{AB}\in\Omega_{QC}}S(\rho_{AB}\Vert \xi_{AB}),
\end{equation}
where $\Omega_{QC}$ is the set of quantum-classical states.
\end{defi}
The relative entropy of quantumness for classical-classical states 
is denoted as $D(\rho_{AB})_{CC}$. It is analogous to Eq.\eqref{f10} 
when the optimization is taken over the set of classical-classical states $\Omega_{CC}$:
\begin{equation}\label{f111}
D(\rho_{AB})_{CC} = \min_{\xi_{AB}\in\Omega_{CC}}S(\rho_{AB}\Vert \xi_{AB}).
\end{equation}

As discussed previously, in the limit of one copy, the one way and the zero way 
work deficits quantify quantumness of correlations of the system.    
It is possible to obtain the equivalence between one way work deficit and relative entropy of quantumness.  

\begin{teo}\label{teo2}
The 1-way work deficit is equal to the relative entropy of quantumness 
for quantum-classical states \cite{horodecki2005local,modiprl2010}:
\begin{equation}
D(\rho_{AB})_{QC} = \Delta^{\rightarrow}(\rho_{AB}),
\end{equation}
\end{teo}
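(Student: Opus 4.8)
The plan is to prove the equality by a pair of inequalities, both of which follow from a single algebraic identity relating the quantum relative entropy to the entropy produced by a dephasing channel. Write $\Pi = \id_A\otimes\Pi_B$ for the local dephasing acting in an orthonormal basis $\{\ket{k}_B\}$, so that $\Pi[\rho_{AB}]$ is a quantum-classical state. The first step is to establish that
\[
S(\rho_{AB}\Vert\Pi[\rho_{AB}]) = S(\Pi[\rho_{AB}]) - S(\rho_{AB}).
\]
The key observation is that $\log\Pi[\rho_{AB}]$ is block-diagonal in the dephasing basis and hence invariant under $\Pi$; since $\Pi$ is trace preserving and self-adjoint as a superoperator, $\Tr[\rho_{AB}\log\Pi[\rho_{AB}]] = \Tr[\Pi[\rho_{AB}]\log\Pi[\rho_{AB}]] = -S(\Pi[\rho_{AB}])$, and the identity follows from the definition of $S(\cdot\Vert\cdot)$. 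This already rewrites the objective of the one-way work deficit as a relative entropy, giving
\[
\Delta^{\rightarrow}(\rho_{AB}) = \min_{\Pi_B\in\M{P}} S(\rho_{AB}\Vert \id_A\otimes\Pi_B[\rho_{AB}]).
\]
Because each $\id_A\otimes\Pi_B[\rho_{AB}]$ lies in $\Omega_{QC}$, this immediately yields $\Delta^{\rightarrow}(\rho_{AB}) \geq D(\rho_{AB})_{QC}$.

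For the reverse inequality I would prove a Pythagorean-type decomposition. Let $\xi_{AB}\in\Omega_{QC}$ be classical on $B$ in the basis $\{\ket{k}_B\}$ and let $\Pi$ be the dephasing in that \emph{same} basis. Then $\log\xi_{AB}$ is block-diagonal in this basis, so $\Tr[\rho_{AB}\log\xi_{AB}] = \Tr[\Pi[\rho_{AB}]\log\xi_{AB}]$, and expanding the two relative entropies gives
\[
S(\rho_{AB}\Vert\xi_{AB}) = S(\rho_{AB}\Vert\Pi[\rho_{AB}]) + S(\Pi[\rho_{AB}]\Vert\xi_{AB}).
\]
By positivity of the relative entropy the second term is nonnegative and vanishes only when $\xi_{AB}=\Pi[\rho_{AB}]$; hence, for a fixed classical basis, the optimal quantum-classical state is the dephased state itself. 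Minimizing finally over the choice of basis (equivalently over $\Pi_B$) reproduces exactly the expression obtained in the first step, so $D(\rho_{AB})_{QC} = \min_{\Pi_B} S(\rho_{AB}\Vert\id_A\otimes\Pi_B[\rho_{AB}]) = \Delta^{\rightarrow}(\rho_{AB})$.

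I expect the main obstacle to be the careful justification of the decomposition, namely the replacement $\Tr[\rho_{AB}\log\xi_{AB}] = \Tr[\Pi[\rho_{AB}]\log\xi_{AB}]$ and the resulting additive splitting. Everything hinges on $\xi_{AB}$ and $\Pi[\rho_{AB}]$ sharing the same block-diagonal structure, so that their logarithms commute with the dephasing projectors and the cross term collapses. Some care is also needed with supports: if $\xi_{AB}$ does not dominate $\rho_{AB}$ the relative entropy diverges, but since the optimal candidate $\xi_{AB}=\Pi[\rho_{AB}]$ always satisfies $\text{supp}(\rho_{AB})\subseteq\text{supp}(\Pi[\rho_{AB}])$, the minimum is finite and attained, so restricting to dephased states costs nothing.
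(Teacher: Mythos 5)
Your argument is correct and complete. The paper itself does not prove Theorem~\ref{teo2}; it only states the equivalence and points to the references \cite{horodecki2005local,modiprl2010}, so there is no in-text proof to compare against. Your route is the standard one from that literature: the identity $S(\rho_{AB}\Vert\Pi[\rho_{AB}])=S(\Pi[\rho_{AB}])-S(\rho_{AB})$ (using that $\log\Pi[\rho_{AB}]$ is invariant under the dephasing and that $\Pi$ is self-adjoint and trace preserving) converts the work-deficit objective into a relative entropy and gives $\Delta^{\rightarrow}\geq D_{QC}$, while the Pythagorean splitting $S(\rho_{AB}\Vert\xi_{AB})=S(\rho_{AB}\Vert\Pi[\rho_{AB}])+S(\Pi[\rho_{AB}]\Vert\xi_{AB})$ for $\xi_{AB}$ classical on $B$ in the dephasing basis shows that, basis by basis, the dephased state is the optimal quantum-classical candidate, yielding the reverse inequality. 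Both key cancellations rest on $\Tr[\rho_{AB}\log\xi_{AB}]=\Tr[\Pi[\rho_{AB}]\log\xi_{AB}]$ for block-diagonal $\xi_{AB}$, which you justify correctly, and your handling of the support issue (divergent $S(\rho_{AB}\Vert\xi_{AB})$ only helps the inequality, and the optimal candidate $\Pi[\rho_{AB}]$ always has the right support) closes the remaining loophole. This is a genuine addition to the chapter, since it makes the stated equivalence self-contained.
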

The same equivalence holds for zero way work deficit and the relative entropy of quantumness 
of classical-classical states:
\begin{equation}
D(\rho_{AB})_{CC} = \Delta^{\emptyset}(\rho_{AB}).
\end{equation}

The one way and zero way work deficits quantify quantumness correlations beyond the quantum 
entanglement, therefore we should be able to compare these two classes of quantum correlations. 
For the relative entropy this comparison is natural of the fact that CLOCC is a sub class of 
LOCC operations, which naturally implies that \cite{opp}: 
\begin{equation}
\Delta(\rho) \geq E_r(\rho), 
\end{equation}
where $\Delta(\rho)$ is the work deficit and  $E_r(\rho)$ is the relative entropy of entanglement. 
The equality is attached for bipartite pure states: $\ket{\psi}_{AB}\in\B{C}_A\otimes\B{C}_B$:
\begin{equation}
\Delta(\Psi_{AB}) =  E_r(\Psi_{AB}) = S(\rho_A), 
\end{equation}
where $\Psi_{AB} = \ketbra{\psi}{\psi}_{AB}$.
An interesting corollary of this proposition is that the quantum discord is equal 
to the work deficit for pure states, because it is also equal to the entropy of entanglement 
for pure states.

In this section it was introduced the concept of local disturbance, by the definition of the work deficit. That is the smallest relative entropy between the state  
and its local disturbed version (obtained performing a local dephasing channel on the state). Indeed there are many other local disturbance quantumness of 
correlations quantifiers, that can be obtained defining a quantum state discrimination measure, that can be for example Bures distance \cite{spehner13}, Schatten - p norm \cite{debarbapra}, trace distance \cite{dreply13}, Hilbert - Schmidt distance \cite{luo2010,piani12p}.  

\section{Monogamy relation: entanglement, classical correlations and quantumness of correlations}
\label{seckw}

Given a bipartite system $\rho_{AB}\in\mathcal{D}(\mathbb{C}_A\otimes \mathbb{C}_B)$, 
then it is possible to purify the state in a larger space $\mathbb{C}_{ABE}$, of the dimension: 
$\text{dim}(\mathbb{C}_{ABE})=\text{dim}(A)\cdot\text{dim}(B)\cdot \text{rank}(\rho_{AB})$. 
The purification process creates quantum correlations between the 
system $AB$ and the purification system $E$, unless the state is already 
pure. 
Intrinsically there is 
a restriction in the amount of correlations that can be shared by the systems.
This balance between the correlations for tripartite states 
can be understood by the Koashi-Winter relation. 

Given the definition of the classical correlations for a bipartite state $\rho_{AB}$:
\begin{equation}
J(A:B)_{\rho_{AB}} = \max_{\id\otimes\M \in \M{P}} I(A:X)_{\id\otimes\M(\rho_{AB})},
\end{equation}
where $I(A:X)_{\id\otimes\M(\rho_{AB})}$ is the mutual information of the post-measured 
state $\id\otimes\M(\rho_{AB})$, and the optimization is taken over all local POVM measurement maps 
$\M \in \M{P}(\B{C}_B,\B{C}_X)$.

Given also the definition of the entanglement of formation of a bipartite state $\rho_{AB}$:
\begin{equation}
E_f(\rho_{AB}) = \min_{\xi_{\rho} = \{p_i,\ketbra{\psi_i}{\psi_i}\}_i} \sum_i p_i E(\ket{\psi_i}),
\end{equation}
where the optimization is taken over all possible convex hull defined by the ensemble 
$\xi =\{p_i,\ketbra{\psi_i}{\psi_i}\}_i$, such that $\rho_{AB} = \sum_ip_i,\ketbra{\psi_i}{\psi_i}$, 
and $E(\ket{\psi_i})$ is the entropy of entanglement of $\ket{\psi_i}$. 

\begin{teo}[Koashi-Winter relation \cite{kw04}]
Considering $\rho_{ABE}\in\mathcal{D}(\mathbb{C}_A\otimes \mathbb{C}_B \otimes \mathbb{C}_E)$ a pure 
state then:
\begin{equation}\label{kw}
J(A:E)_{\rho_{AE}} = S(\rho_{A}) - E_f(\rho_{AB}),
\end{equation}
where $\rho_X = Tr_{Y}[\rho_{YX}]$.
\end{teo}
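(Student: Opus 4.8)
The plan is to exploit the purity of the global state to identify the optimization defining the classical correlation $J(A:E)_{\rho_{AE}}$ with the optimization defining the entanglement of formation $E_f(\rho_{AB})$. First I would rewrite the classical correlation in its explicit form, as in Eq.~\eqref{f1} with the measured party being $E$, so that the maximization of mutual information becomes a minimization of an average conditional entropy:
\begin{equation}
J(A:E)_{\rho_{AE}} = S(\rho_A) - \min_{\M{M}\in\M{P}} \sum_x p_x S(\rho_x^A),
\end{equation}
where $\M{M}$ ranges over local POVM maps acting on $E$, $\rho_x^A$ is the reduced state of $A$ conditioned on outcome $x$, and the constant term $S(\rho_A)$ has been pulled out of the optimization.

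Next I would analyze the conditional states explicitly. Performing a rank-1 POVM $\{M_x\}$ on $E$ of the pure state $\ket{\psi}_{ABE}$ projects the unmeasured pair $AB$ onto pure states $\ket{\psi_x}_{AB}$, occurring with probabilities $p_x = \Tr[(\id_{AB}\otimes M_x)\ketbra{\psi}{\psi}_{ABE}]$. These weighted pure states furnish an ensemble decomposition $\rho_{AB} = \sum_x p_x \ketbra{\psi_x}{\psi_x}$. Because each $\ket{\psi_x}_{AB}$ is pure, property~\ref{teo50} gives $S(\rho_x^A) = S(\Tr_B\ketbra{\psi_x}{\psi_x}) = E(\ket{\psi_x})$, the entropy of entanglement of $\ket{\psi_x}_{AB}$. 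Hence $\sum_x p_x S(\rho_x^A) = \sum_x p_x E(\ket{\psi_x})$, which is precisely the functional whose minimum defines $E_f(\rho_{AB})$.

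The decisive step is to argue that the two minimizations range over the same set. Here I would invoke the purification-ensemble correspondence: every rank-1 POVM on the purifying system $E$ induces a pure-state ensemble of $\rho_{AB}$, and conversely every pure-state ensemble of $\rho_{AB}$ is realized by some rank-1 measurement on $E$. To restrict the $J$-side optimization to rank-1 measurements without loss, I would use a refinement argument: splitting a POVM element into finer pieces can only decrease $\sum_x p_x S(\rho_x^A)$, by concavity of the von Neumann entropy (property~\ref{item.3}), so the optimal $J$ is attained on rank-1 measurements; on the $E_f$-side the decompositions are pure by definition. Combining these, $\min_{\M{M}} \sum_x p_x S(\rho_x^A) = E_f(\rho_{AB})$, and therefore $J(A:E)_{\rho_{AE}} = S(\rho_A) - E_f(\rho_{AB})$.

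I expect the main obstacle to be the precise statement and application of the purification-ensemble correspondence (the Hughston-Jozsa-Wootters theorem): one must verify that the two optimization domains are genuinely identified, with matching probabilities and post-measurement states, and that enlarging the number of outcomes beyond $\dim E$ is permissible for rank-1 POVMs so that every finite ensemble of $\rho_{AB}$ is captured. The remaining ingredients, namely the pure-state entropy identity and the rewriting of $J$, follow directly from results already established in the excerpt.
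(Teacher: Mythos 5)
Your proposal is correct and follows essentially the same route as the paper: both rest on the Hughston--Jozsa--Wootters correspondence between rank-1 measurements on the purifying system $E$ and pure-state decompositions of $\rho_{AB}$, together with the identity $S(\rho_x^A)=E(\ket{\psi_x})$ for the pure conditional states; the paper merely organizes the argument as two separate inequalities (optimal $E_f$-ensemble $\Rightarrow$ lower bound on $J$, optimal rank-1 POVM $\Rightarrow$ upper bound on $J$) where you identify the two optimization domains in one step. Your explicit refinement-plus-concavity justification for restricting to rank-1 POVMs is a point the paper asserts without proof, so your write-up is, if anything, slightly more complete.
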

\begin{proof}
Suppose $\rho_{AB} = \sum_i p_i \ketbra{\psi_i}{\psi_i}$ is the optimum convex combination, such that 
$ E_f(\rho_{AB}) =\sum_i p_i S(Tr_B[\ketbra{\psi_i}{\psi_i}])$. 
The classical correlations in system $AE$ 
relates this decomposition with a measurement on the subsystem $E$.     
Therefore, there exists a measurement $\{ M_j^{E} \}$ on system $E$ 
such that $\rho'_{ABE} = \sum_j Tr_E[\rho_{ABE}(\id_{AB}\otimes M_j^E)]\otimes \ketbra{e_j}{e_j}_E$ 
and $Tr_{E}[\rho'_{ABE}] =\sum_i p_i \ketbra{\psi_i}{\psi_i} $. Tracing over subsystem $B$, then 
the post-measurement state will be: 
\begin{equation}
\rho'_{AE} =  \sum_j p_j Tr_B[ \ketbra{\psi_j}{\psi_j}] \otimes \ketbra{e_j}{e_j}, 
\end{equation}  
in this way, the mutual information of the post-measurement state:
\begin{eqnarray}
I(A:E)_{\rho'_{AE}} &=& S(\rho_A)+S(\rho'_E)-S(\rho'_{AE}),\\
&=& S(\rho_A)+ H(E) - H(E) - \sum_i p_i S(Tr_A[\ketbra{\psi_j}{\psi_j}]),\\
&=& S(\rho_A)- \sum_i p_i S(Tr_A[\ketbra{\psi_j}{\psi_j}]),\\
&=& S(\rho_A)- E_f(\rho_{AB}) ,
\end{eqnarray}
it was used the property of the Shannon entropy for a block diagonal state, where \\ 
$Tr_B[\ketbra{\psi_j}{\psi_j}] = Tr_A[\ketbra{\psi_j}{\psi_j}]$ and 
$ E_f(\rho_{AB}) =\sum_i p_i S(Tr_B[\ketbra{\psi_i}{\psi_i}])$. 
By definition\\ $J(A:E)_{\rho_{AE}}\geq I(A:E)_{\rho'_{AE}}$, then
\begin{equation}\label{e3}
J(A:E)_{\rho_{AE}}\geq S(\rho_A) - E_f(\rho_{AB}). 
\end{equation}
Now it is proved the converse inequality. Given $\rho_{AE}$, there exists a POVM 
$\mathcal{M}\in \mathcal{P}(\mathbb{C}_{E},\mathbb{C}_{E'})$ 
with  rank-1 elements $\{ M_l \}$, such that $Tr_E[M_l \rho_{AE}] = q_l \rho_l^A$ 
that optimizes the classical correlations $J(\rho_{AE}) = S(\rho_A) - \sum_l q_l S(\rho_l^A)$. 
As the elements of the POVM are rank-1, $M_l = \ketbra{\mu_l}{\mu_l}$, and 
the state $\rho_{ABE}$ is pure, the state after local measurement on $E$ 
will be described by an ensemble of pure states:
\begin{equation}
\rho'_{ABE} =\sum_l Tr_E[\rho_{ABE}(\id_{AB}\otimes \ketbra{\mu_l}{\mu_l})]\otimes \ketbra{e_l}{e_l}
= \sum q_l \ketbra{\phi_l}{\phi_l}\otimes \ketbra{e_l}{e_l}. 
\end{equation} 
Once that $\rho_{ABE}=\ketbra{\kappa}{\kappa}_{ABE}$, and the pure 
state can be written in the bipartite Schmidt decomposition 
$\ket{\kappa} = \sum_n c_n \ket{n}_{AB}\otimes\ket{n}_{E}$, if $\braket{\mu_l}{n} = r_{ln}$, 
it is easy to see that:
\begin{equation}
Tr_E[\rho_{ABE}(\id_{AB}\otimes \ketbra{\mu_l}{\mu_l})] = 
\sum_{ij}c_ir_{li} c_j r_{lj}^{*} \ketbra{i}{j}_{AB} = \left(\sum_{i}c_ir_{li}\ket{i}_{AB}\right)
\left(\sum_{j}c_jr_{lj}^{*}\bra{j}_{AB}\right) = q_l \ketbra{\phi_l}{\phi_l}.
\end{equation}
Calculating the mutual information of $\rho'_{AE} = Tr_B[\rho'_{ABE}]$:
\begin{equation}\label{e1}
I(A:E)_{\rho'_{AE}} = S(\rho_A) - \sum_l q_l S(Tr_B[\ketbra{\phi_l}{\phi_l}]),
\end{equation}
as the POVM $\M{M}$ is the optimal measurement in the calculation of the classical correlations 
it implies $I(A:E)_{\rho'_{AE}} = J(A:E)_{\rho_{AE}}$. 
By the definition of the entanglement of formation: 
$E_f(\rho_{AB}) \leq \sum_l q_l S(Tr_B[\ketbra{\phi_l}{\phi_l}])$ for any 
decomposition $\{p_l, \ketbra{\phi_l}{\phi_l} \}$. Substituting the mutual information 
on Eq.\eqref{e1}:
\begin{equation}\label{e2}
J(A:E)_{\rho_{AE}} \leq S(\rho_A) - E_f(\rho_{AB}).
\end{equation}
Given Eq.\eqref{e3} and Eq.\eqref{e2} it proves the theorem.
\end{proof}
The Koashi-Winter equation quantifies the amount of entanglement among $A$ and  
$B$, considering that the former is classically correlated with another system $C$. This property 
is interesting once that it is related to the monogamy of entanglement 
\cite{wooters00}, where the amount 
of entanglement shared by three parts is limited, and this limitation is given by the 
amount of classical correlations among the parties. This limitation holds for 
any tripartite state as stated in the following corollary:
\begin{cor}
For any tripartite state $\rho_{ABC}\in\mathcal{D}(\mathbb{C}_A\otimes \mathbb{C}_B \otimes \mathbb{C}_C)$, 
it follows:
\begin{equation}
E_f(\rho_{AB}) + J(A:C)_{\rho_{AC}} \leq S(\rho_A).
\end{equation}
The equality holds for $\rho_{ABC}$ pure.
\end{cor}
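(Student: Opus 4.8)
The plan is to reduce the general (mixed) tripartite statement to the pure-state Koashi--Winter relation by purification, exploiting the fact that the classical correlation can only grow when the measured subsystem is enlarged. First I would purify $\rho_{ABC}$, adjoining an ancilla $E$ so that $\ket{\psi}_{ABCE}$ is pure with $\Tr_E[\ketbra{\psi}{\psi}_{ABCE}] = \rho_{ABC}$. Grouping $C$ and $E$ into a single subsystem $CE$, the vector $\ket{\psi}_{ABCE}$ is a \emph{pure} tripartite state in the partition $A$, $B$, $CE$, so the Koashi--Winter relation applies verbatim with $CE$ playing the role of the purifying system, giving
\begin{equation}
J(A:CE)_{\rho_{ACE}} = S(\rho_A) - E_f(\rho_{AB}),
\end{equation}
where $\rho_{ACE} = \Tr_B[\ketbra{\psi}{\psi}_{ABCE}]$.

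The key observation is then the monotonicity $J(A:CE)_{\rho_{ACE}} \geq J(A:C)_{\rho_{AC}}$. This holds because $J$ is defined as a maximum of $I(A:X)$ over local measurement maps on the measured system, and any POVM $\{M_x^C\}$ on $C$ lifts to the POVM $\{M_x^C \otimes \id_E\}$ on $CE$ producing identical outcome probabilities and identical conditional states on $A$ (tracing out $E$ first returns $\rho_{AC}$). Hence every measurement feasible in the optimization for $J(A:C)$ occurs, with the same mutual-information value, inside the larger optimization for $J(A:CE)$, so the latter maximum dominates. Combining this with the displayed equality yields
\begin{equation}
S(\rho_A) - E_f(\rho_{AB}) = J(A:CE)_{\rho_{ACE}} \geq J(A:C)_{\rho_{AC}},
\end{equation}
which rearranges to the claimed bound $E_f(\rho_{AB}) + J(A:C)_{\rho_{AC}} \leq S(\rho_A)$.

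For the equality case, when $\rho_{ABC}$ is already pure no purifying ancilla is required: the system $E$ is trivial, $CE = C$, and the inequality $J(A:CE) \geq J(A:C)$ collapses to an identity. The Koashi--Winter relation then applies to $\rho_{ABC}$ directly and gives $J(A:C)_{\rho_{AC}} = S(\rho_A) - E_f(\rho_{AB})$, saturating the bound.

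I expect the main subtlety to be the monotonicity step $J(A:CE) \geq J(A:C)$, where one must verify carefully that restricting measurements to act only on $C$ through $M_x^C \otimes \id_E$ reproduces exactly the outcome distribution and the post-measurement conditional ensemble on $A$ entering the definition of $J(A:C)$. Once that embedding of measurement maps is established, the rest is a direct invocation of the Koashi--Winter theorem together with the standard existence of a purification of dimension at least $\mathrm{rank}(\rho_{ABC})$.
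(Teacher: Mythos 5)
Your proposal is correct and follows essentially the same route as the paper: purify $\rho_{ABC}$ with an ancilla $E$, apply the Koashi--Winter relation to the pure state in the partition $A$, $B$, $CE$, and conclude via the monotonicity $J(A:CE)_{\rho_{ACE}}\geq J(A:C)_{\rho_{AC}}$ of the classical correlations under discarding $E$. Your explicit justification of that monotonicity step (lifting any POVM on $C$ to $M_x^C\otimes\id_E$) is a useful elaboration of what the paper states only as ``classical correlations are monotonic under local maps.''
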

\begin{proof}
If $\rho_{ABC}$ is not a pure state, there exists a purification $\rho_{ABCE}$, 
then
$\mathbb{C}_A\otimes\mathbb{C}_B\otimes\mathbb{C}_{CE}$, then 
follows the last theorem:
\begin{equation}
J(A:CE)_{\rho_{ACE}}+ E_f(\rho_{AB})= S(\rho_A),
\end{equation}
therefore as the classical correlations are monotonic under local maps, 
then taking the trace over the system $E$ we have   
$J(A:CE)_{\rho_{ACE}}\geq J(A:C)_{\rho_{AC}}$. 
\end{proof}
As the Shannon entropy of $\rho_A$ represents the effective size of $A$ in qubits \cite{schumacher95}, 
this size can be approached as the capacity of the system $A$ makes correlations with other systems 
$B$ and $C$ \cite{kw04}. In other words, this means that the existence of the quantum or classical 
correlations between  $A$ and another system $B$ is enough to restrict the amount of 
quantum or classical correlations which $A$ can make with a third system $C$.

Summing the mutual information $I(A: E)_{\rho_{AE}}$ on both sides of the Koashi-Winter 
relation, Eq.\eqref{kw}, it is possible to obtain a monogamy expression for the entanglement 
of formation of the state $\rho_{AB}$ in function of the quantum discord \cite{fanchini11}: 
\begin{equation} \label{e4}
D(A:E)_{\rho_{AE}} =  E_f(\rho_{AB}) - S(A|E)_{\rho_{AE}},
\end{equation}
where $D(A:E)_{\rho_{AE}}$ is the quantum discord of the state $\rho_{AE}$ with 
local measurement on the subsystem $E$ and $S(A|E)_{\rho_{AE}} = S(AE) - S(E)$ 
is the conditional entropy. As the label in the states is arbitrary we can 
rewrite this expression changing the labels $E\rightarrow B$ and vice versa 
to obtain $D(A:B)_{\rho_{AB}} = S(A|B)_{\rho_{AB}} - E_f(\rho_{AE})$, 
taking the sum between this and Eq.\eqref{e4}:
\begin{equation}
D(A:E)_{\rho_{AE}} + D(A:B)_{\rho_{AB}} = E_f(\rho_{AE})+ E_f(\rho_{AB}),
\end{equation} 
as the total state is pure $S(A|E)_{\rho_{AE}} = -S(A|B)_{\rho_{AB}}$. 
This expression means that the sum of total amount of entanglement   
that $A$ shares with $B$ and $E$ is equal to the sum of the amount of quantum discord shared with $B$ and $E$ \cite{fanchini11}.

From Eq.\ref{e4} it is possible to calculate an interesting expression which 
relates the irreversibility of the entanglement distillation protocol  
and quantum discord \cite{cornelio11}. 
As discussed, the 
entanglement cost is larger than the distillable entanglement. 
Given the entanglement cost defined as the regularization of the 
entanglement of formation \cite{hayden01}:
\begin{defi}
For a mixed state $\rho_{AB}\in\M{D}(\mathbb{C}_A\otimes\mathbb{C}_B)$  
the regularization of the entanglement of formation $E_f(\rho_{AB})$
results in the entanglement cost: 
\begin{equation}
E_C(\rho_{AB}) = \lim_{n\rightarrow \infty}\frac{1}{n}E_{f}(\rho_{AB}^{\otimes n}).
\end{equation}  
\end{defi}
The Hashing inequality says that the distillable entanglement of $\rho_{AB}$ is lower bounded by the 
coherent information $I(A\rangle B)_{\rho_{AB}} = - S(A|B)$ \cite{devetak05}. As the 
coherent information can increase under LOCC it is possible to optimize it under LOCC  
attaining the distillable entanglement \cite{devetak05}. 
\begin{defi}
The regularized coherent information after optimization over LOCC for a mixed state $\rho_{AB}$ 
gives the distillable entanglement:
\begin{equation}
E_D(\rho_{AB}) = \lim_{n\rightarrow \infty}\frac{1}{n}I(A\rangle B)_{(V_n\otimes\id)\rho_{AB}^{\otimes n}},
\end{equation}
where $V_n\otimes\id$ acts locally on the $n$ copies of $\rho_{AB}$.
\end{defi} 
It is  also possible to define the regularized quantum discord:
\begin{defi}
The regularized quantum discord can be defined as the quantum discord of a state $\rho_{AB}$
in the limit of many copies:
\begin{equation}
D^{\infty}(A:B)_{\rho_{AB}} = 
\lim_{n\rightarrow \infty}\frac{1}{n}D(A:B)_{\rho_{AB}^{\otimes n}}.
\end{equation}
\end{defi}  
Therefore similarly to Eq.\eqref{e4} in the limit of many copies:
\begin{equation} 
D(A:E)_{\rho_{AE}^{\otimes n}} = E_f(\rho_{AB}^{\otimes n})-S(A|E)_{\rho_{AE}^{\otimes n}} ,
\end{equation}
taking the regularization we have:
\begin{equation} \label{e5}
D^{\infty}(A:E)_{\rho_{AE}} = E_C(\rho_{AB}) - S(A|E)_{\rho_{AE}},
\end{equation}
as the conditional entropy is additive $S(A|E)_{\rho_{AE}^{\otimes n}} = nS(A|E)_{\rho_{AE}}$.  
Therefore the follow theorem comes from \eqref{e4}. 
\begin{teo}[M. Cornelio {\it et al.} \cite{cornelio11}]
For every mixed entangled state $\rho_{AB}$, if
\begin{eqnarray}
E_D(\rho_{AB}) &=& \frac{1}{n}I(A\rangle B)_{(V_n\otimes\id)\rho_{AB}^{\otimes n}}\\
E_C(\rho_{AB}) &=& \frac{1}{k}E_{F}(\rho_{AB}^{\otimes n}),
\end{eqnarray}
for a finite number of $n$ and $k$, the entanglement is irreversible $E_C(\rho_{AB})>E_D(\rho_{AB})$.
\end{teo}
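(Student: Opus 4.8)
The plan is to read the two hypotheses as the statement that both regularizations collapse to finite-copy values, and then to feed the \emph{optimal distillation output} into the Koashi--Winter identity \eqref{e4}. Let $N$ be a common multiple of the finite $n$ and $k$ granted by the hypothesis. Using subadditivity of $E_f$ together with the fact that the regularization is an infimum, the finite-$k$ assumption upgrades to $E_f(\rho_{AB}^{\otimes N}) = N\,E_C(\rho_{AB})$; and building the level-$N$ distillation map from tensor powers of the optimal finite-copy map (coherent information being additive on tensor products) gives $E_D(\rho_{AB}) = \tfrac{1}{N}\,I(A\rangle B)_{\sigma_{AB}}$, where $\sigma_{AB} = (V_N\otimes\id)\rho_{AB}^{\otimes N}$ is the distillate. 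Since $V_N$ acts as a local isometry, the purification $\sigma_{ABE}$ is obtained by acting with $V_N$ on the purification of $\rho_{AB}^{\otimes N}$, so it is again pure and the pure-state entropic identities apply.

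First I would rewrite the distillation rate as an environment conditional entropy. For the pure state $\sigma_{ABE}$ one has $S(A|E)_{\sigma_{AE}} = -S(A|B)_{\sigma_{AB}} = I(A\rangle B)_{\sigma_{AB}}$, whence $N\,E_D(\rho_{AB}) = S(A|E)_{\sigma_{AE}}$. Applying relation \eqref{e4} to the bipartition $A:E$ of $\sigma$ gives $S(A|E)_{\sigma_{AE}} = E_f(\sigma_{AB}) - D(A:E)_{\sigma_{AE}}$, so that combining the two yields the exact identity $N\,E_D(\rho_{AB}) = E_f(\sigma_{AB}) - D(A:E)_{\sigma_{AE}}$. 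The third step bounds the distillate's entanglement of formation by the cost: because $E_f$ is an entanglement monotone and $\sigma_{AB}$ arises from $\rho_{AB}^{\otimes N}$ by a local operation, $E_f(\sigma_{AB}) \le E_f(\rho_{AB}^{\otimes N}) = N\,E_C(\rho_{AB})$. Substituting this gives
\begin{equation}
E_C(\rho_{AB}) - E_D(\rho_{AB}) \ge \frac{1}{N}\,D(A:E)_{\sigma_{AE}} \ge 0.
\end{equation}
Here the role of the \emph{finiteness} hypothesis becomes transparent: because $N$ is finite, the prefactor $1/N$ cannot wash the right-hand side down to zero, so strict irreversibility follows the moment the quantumness of correlations $D(A:E)_{\sigma_{AE}}$ between the distillate and its purifying environment is strictly positive. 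Had the limits only been attained asymptotically, $\tfrac{1}{N}D(A:E)_{\sigma_{AE}}$ could vanish as $N\to\infty$ and the argument would give only $E_C \ge E_D$.

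The hard part is precisely this positivity, $D(A:E)_{\sigma_{AE}} > 0$. The natural route is a dichotomy: if instead $D(A:E)_{\sigma_{AE}} = 0$, then Step~2 forces $E_f(\sigma_{AB}) = I(A\rangle B)_{\sigma_{AB}}$, which together with the universal ordering $E_f \ge E_C \ge E_D \ge I(A\rangle B)$ collapses all four quantities for $\sigma_{AB}$ and makes it reversibly distillable --- the signature of an effectively pure, maximally entangled output. One must then argue that a genuinely mixed entangled $\rho_{AB}$ cannot be compressed by a finite-copy local protocol into such a zero-discord, pure-like $\sigma_{AB}$. I expect this to be the main obstacle, and a delicate one: mixedness of $\sigma_{AB}$ alone does \emph{not} guarantee $D(A:E)_{\sigma_{AE}}>0$ (a pure $\sigma_{ABE}$ with $\sigma_{AE}$ block-diagonal on $E$ has vanishing discord even when $\sigma_{AB}$ is mixed), so the specialness of the distillation output must genuinely be used. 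Everything preceding this step is bookkeeping with the Koashi--Winter identity and the monotonicity of $E_f$; making the discord positivity rigorous rather than heuristic is where the real work lies.
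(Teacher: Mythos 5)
Your bookkeeping steps are sound and, in fact, reproduce at finite $N$ exactly the identity the paper itself derives around the theorem (the paper gives no proof of the statement; it only establishes $D^{\infty}(A:E)_{\sigma_{AE}} = E_C(\sigma_{AB}) - E_D(\sigma_{AB})$ and defers the rest to the cited reference). Your use of Fekete/subadditivity to get $E_f(\rho_{AB}^{\otimes N}) = N E_C(\rho_{AB})$, the additivity of coherent information to lift the distillation map to $N$ copies, relation \eqref{e4} on a purification of $\sigma_{AB}$, and LOCC monotonicity of $E_f$ together give $E_C(\rho_{AB}) - E_D(\rho_{AB}) \geq \tfrac{1}{N} D(A:E)_{\sigma_{AE}} \geq 0$ correctly. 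One small repair: $V_N$ is an LOCC channel, not a local isometry, so the purification of $\sigma_{AB}$ is not obtained by "acting with $V_N$" on the purification of $\rho_{AB}^{\otimes N}$; you need a Stinespring dilation with an enlarged environment. This is harmless because \eqref{e4} holds for an arbitrary purification and all purifications are related by isometries on $E$, but it should be said.

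The genuine gap is the one you yourself flag: strict positivity of $D(A:E)_{\sigma_{AE}}$. Without it you have only $E_C \geq E_D$, which is not the theorem. Moreover, your worry is not a technicality that more effort would dissolve --- the statement as literally printed cannot be closed. Take $\rho_{AB} = \sum_i p_i \ketbra{\Phi_i}{\Phi_i} \otimes \ketbra{ii}{ii}$ with entangled $\ket{\Phi_i}$ and locally held orthogonal flags: this state is mixed and entangled, satisfies both finite-copy hypotheses already at $n=k=1$ (with $V_1 = \id$), its purification is classical on $E$ so $D(A:E)_{\sigma_{AE}} = 0$, and indeed $E_C = E_D$. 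So the dichotomy you propose terminates in a case that actually occurs, and the "main obstacle" you identify is exactly where the original reference must (and does) import additional structure, characterizing the zero-discord outputs as precisely such locally flagged ensembles and excluding them by hypothesis. A complete proof therefore requires either strengthening the hypothesis on $\rho_{AB}$ or carrying out that structural classification; neither is present in your argument, and the paper under review does not supply it either.
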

Taking the limit of many copies, the equation can be rewritten as:
\begin{equation}
D^{\infty}(A:E)_{\sigma_{AE}} = E_C(\sigma_{AB}) - E_D(\sigma_{AB}),
\end{equation}
where $\sigma_{AB} = (V_k\otimes\id)\rho_{AB} $ and $E_D(\sigma_{AB}) = kE_D(\rho_{AB})$. 
The quantum discord $D^{\infty}(A:E)_{\sigma_{AE}}$ in this context can 
be viewed as the minimal amount of entanglement lost in the distillation protocol, 
for states belonging to the class described in the theorem \cite{cornelio11}. 
This expression has an operational interpretation for 
quantum discord, where the quantum discord between the system and the purification system 
restricts the amount of e-bits lost in the distillation process. 
A similar interpretation can be obtained via the state merging protocol \cite{sm}, 
Alice (A), Bob (B) and the Environment (E) share a pure tripartite state $\rho_{ABE}$, 
she would like to send her state to Bob, keeping the coherence with  
the system $E$. They can perform this protocol consuming 
an amount of entanglement in the process, the amount of entanglement is 
the regularized quantum discord $D^{\infty}(A:E)_{\rho_{AE}}$ \cite{esm,dattasm}.   

In addition to the above relations, some upper and lower bounds between quantum discord 
and entanglement of formation have been calculated via the Koashi-Winter relation 
and the properties of entropy   \cite{yu,xi,xib,zhang}.
The Eq.\ref{e4} was also used to calculate the quantum discord and 
the entanglement of formation analytically for systems with  rank-$2$ and dimension 
$2\otimes n$ \cite{fanchini2012,lastra,cen}.
Experimental investigations of Eq.\eqref{e4} were performed in the characterization of the information flow between system and environment of a Non-Markovian process \cite{fffprl14}.  

\section{Activation protocol}
\label{secap}

Physically a measurement process  can be described as an interaction between the measurement 
apparatus and the system, followed by a projective measurement on the apparatus. 
Consider a state $\rho_{\M{S}}=\sum_k \lambda_k \ketbra{k}{k}
\in\M{D}(\mathbb{C}_{\M{S}})$. The input state, 
is described as $\rho_{\M{S}:\M{M}} = \rho_{\M{S}}\otimes \ketbra{0}{0}_{\M{M}}$, by coupling a pure ancilla, that represents de the measurement apparatus.
The interaction between the system and the ancillary state is  performed by a unitary evolution:  
$U_{\M{S}:\M{M}}\in\M{U}(\mathbb{C}_{\M{S}}\otimes\mathbb{C}_{\M{M}})$, such that 
$Tr_{\M{M}}[U_{\M{S}:\M{M}} \rho_{\M{S}:\M{M}} U_{\M{S}:\M{M}}^{\dagger}]
= \sum_l \Pi_l \rho_{\M{S}} \Pi_l^{\dagger}$. 
A unitary operation satisfing this condition is given by:
\begin{equation}
U_{\M{S}:\M{M}}\ket{k}_{\M{S}}\ket{0}_{\M{M}} = \ket{k}_{\M{S}}\ket{k}_{\M{M}},
\end{equation}
where $\{ \ket{k}\}$ is an orthonormal basis in $\mathbb{C}_{\M{S}}$. 
If the orthogonal basis $\{\ketbra{k}{k}\}$ is the canonical basis, this 
interaction is a Cnot gate \cite{chuang}. Therefore, after the interaction, 
the state will be:
\begin{equation}
\tilde{\rho}_{\M{S}:\M{M}}=U_{\M{S}:\M{M}}(\rho_{\M{S}:\M{M}})U_{\M{S}:\M{M}}^{\dagger} =  
\sum_k \lambda_k \ketbra{k}{k}_{\M{S}}\otimes \ketbra{k}{k}_{\M{M}}.
\end{equation}
The interaction between the system and the measurement apparatus results in a 
classically correlated state between the system and the apparatus. Hence 
performing a projective measurement on the state of the apparatus, the 
state of the system can be recovered. 

Suppose now that the state of the system is composed,  
for example a bipartite system 
$\B{C}_{\M{S}} = \B{C}_A\otimes\B{C}_B$.  The measurements are performed locally in 
each system, therefore the ancilla is also a bipartite system $\mathbb{C}_{\M{M}} = \mathbb{C}_{\M{M}_A}\otimes \mathbb{C}_{\M{M}_B}$. 
The unitary operator representing the interaction between the system and 
the measurement apparatus is $U_{\M{S}:\M{M}} = U_{A:\M{M}_A}\otimes U_{B:\M{M}_B}$. Then 
the post-measured state is:
\begin{equation}
\tilde{\rho}_{\M{S}}=Tr_{\M{M}}[U_{\M{S}:\M{M}}(\rho_{\M{S}}\otimes \ketbra{0}{0})U_{\M{S}:\M{M}}^{\dagger}]= 
\sum_{k,l} \Pi_{k}^{A}\otimes\Pi_l^{B} \rho_{AB} \Pi_{k}^{\dagger A}\otimes\Pi_l^{\dagger B}. 
\end{equation}
As aforementioned, the measurement process consists in interacting the system with an ancilla, 
that  represents the measurement apparatus, and then perform a projective measurement 
over the ancilla. 
However as the dimension of the ancilla 
is arbitrary, to represent a general measurement  (POVM),  
it is necessary to  
couple another ancilla with the same size of the state:
$\rho_{\M{S}':\M{M}}=\rho_{\M{S}}\otimes\ketbra{0}{0}_{\M{E}}\otimes \ketbra{0}{0}_{\M{M}}$, 
where $\ketbra{0}{0}_{\M{E}}$ is an ancillary state on space $\B{C}_{\M{E}}$. 
Then the interaction with the apparatus, given by a unitary evolution 
$U_{\M{S}':\M{M}}$, results in the post-measured state
\begin{equation}
\tilde{\rho}_{\M{S}} = Tr_{\M{M}}[U_{\M{S}':\M{M}} \rho_{\M{S}':\M{M}} U_{\M{S}':\M{M}}^{\dagger}] = 
\sum_{l} \Pi_l (\rho_{\M{S}}\otimes\ketbra{0}{0}_{\M{E}}) \Pi_l. 
\end{equation}
By the Naimark's theorem 
$Tr[\Pi_l (\rho_{\M{S}}\otimes \ketbra{0}{0}_{\M{E}})] = Tr[E_l \rho_{\M{S}}]$, where  
$E_l = (\id\otimes \bra{0}) \Pi_l (\id\otimes \ket{0})$ is a element of a POVM. 

A general a bipartite state can be written as $\rho = \sum_{i,j} \ketbra{i}{j}\otimes O_{i,j}$, 
where $O_{i,j}$ is an Hermitian operator with trace different from zero. 
Then if the measurement is performed only on the subsystem $A$, the 
state $\tilde{\rho}_{\M{S}:\M{M}}$ after the interaction with the measurement apparatus will be:
\begin{eqnarray}
\tilde{\rho}_{\M{S}:\M{M}} &=& U_{\M{S}:\M{M}}(\rho_{\M{S}:\M{M}})U_{\M{S}:\M{M}}^{\dagger} \\
&=& U_{A:\M{M}_A}\otimes\id_B\left( \sum_{i,j} \ketbra{i}{j}_{A} \otimes \ketbra{0}{0}_{\M{M}_A} \otimes O_{i,j}^{B}\right ) U_{A:\M{M}_A}^{\dagger}\otimes\id_B\\
&=& \sum_{i,j} \ketbra{i}{j}_{A} \otimes \ketbra{i}{j}_{\M{M}_A} \otimes O_{i,j}^{B}. 
\end{eqnarray}
Differently of the global measurement process, for local measurements, entanglement can be created during the measurement process. 
For example, if $O_{ij}=\frac{1}{2}\ketbra{i}{j}$, the system the interaction with the measurement apparatus creates a maximally entangle state. 
different from the case where the measurement is performed on the 
 A quantum state cannot create quantum entanglement with the measurement apparatus, if it is classically correlated. As proved in the following theorem. 
\begin{teo}[\cite{pianiprl,streltsov11}]
A state is classically correlated (has no quantumness of correlations),  
if and only if there exists an unitary operation such that the 
post interaction state is separable with respect to 
system and measurement apparatus.  
\end{teo}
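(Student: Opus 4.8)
The claim is an "if and only if," so I would prove the two directions separately, with the central quantity being the entanglement created across the system–apparatus cut after a unitary interaction $U_{\M{S}:\M{M}}$ of the type fixed earlier in the section, namely $U_{A:\M{M}_A}\otimes U_{B:\M{M}_B}$ acting on $\rho_{AB}\otimes\ketbra{0}{0}_{\M{M}}$.

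The easy direction is that a classically correlated state creates no entanglement for a suitable $U$. The plan is to take $\rho_{AB}$ to be classical-classical in the sense of Definition~\ref{defi2}, so that $\rho_{AB}=\sum_{k,l}p_{k,l}\,\Pi_l^A\otimes\Pi_k^B$ with $\Pi_x^Y=\ketbra{e_x}{e_x}_Y$. Feeding this into the interaction computed earlier in the section, where the $Cnot$-type unitary sends $\ket{k}_{\M{S}}\ket{0}_{\M{M}}\mapsto\ket{k}_{\M{S}}\ket{k}_{\M{M}}$ in each local basis, I would compute the global post-interaction state explicitly and observe that it is
\begin{equation}
\tilde{\rho}_{\M{S}:\M{M}}=\sum_{k,l}p_{k,l}\,\Pi_l^A\otimes\ketbra{l}{l}_{\M{M}_A}\otimes\Pi_k^B\otimes\ketbra{k}{k}_{\M{M}_B},
\end{equation}
which is manifestly a convex combination of products across the $\M{S}:\M{M}$ cut, hence separable. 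So the existence of a unitary rendering the output separable is immediate once the measurement basis is aligned with the eigenbasis that diagonalizes the classical-classical state.

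The hard direction is the converse: if \emph{for every} local unitary interaction the output is entangled across $\M{S}:\M{M}$, then $\rho_{AB}$ must have quantumness of correlations; equivalently, if $\rho_{AB}$ is \emph{not} classically correlated then \emph{no} choice of $U$ produces a separable output. I would argue the contrapositive in the form: assume there exists a $U$ making $\tilde{\rho}_{\M{S}:\M{M}}$ separable, and deduce that $\rho_{AB}$ is classical-classical. The key structural fact is that the local interaction is an isometry, so tracing out the apparatus recovers the post-measurement (dephased) state $\Pi(\rho_{AB})$, while separability across the $\M{S}:\M{M}$ cut constrains the correlations that the apparatus can carry away. Concretely, I would expand $\rho_{AB}=\sum_{i,j}\ketbra{i}{j}_A\otimes O_{ij}^B$ as done in the excerpt, push it through $U_{A:\M{M}_A}\otimes\id_B$ (and symmetrically on $B$), and examine when the resulting state, which tags each coherence $\ketbra{i}{j}_A$ with an orthogonal apparatus flag $\ketbra{i}{j}_{\M{M}_A}$, can be separable. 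The main obstacle, and the crux of the proof, is showing that separability of this flagged state forces the off-diagonal blocks ($i\neq j$) to vanish, i.e. forces $\rho_{AB}$ to be diagonal in the local measurement bases; the natural tool is a partial-transpose or a local-map argument, using that entanglement cannot be created by LOCC and that the apparatus flags are perfectly distinguishable. I expect to invoke monotonicity of correlations under local channels (the Corollary proved earlier) together with the observation that a faithful entanglement witness or the $PPT$ criterion detects any surviving coherence, thereby pinning $\rho_{AB}$ to the classical-classical form of Definition~\ref{defi2}.
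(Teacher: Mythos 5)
Your \textbf{if} direction is exactly the paper's: align the Cnot-type interaction with the basis diagonalizing the classical-classical state and read off a manifestly separable output. No issues there.

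The \textbf{only if} direction is where the proposal falls short of a proof. You correctly set up the contrapositive (separable output $\Rightarrow$ classical-classical input) and correctly identify the crux --- that separability of the flagged state must kill the off-diagonal blocks $O_{ij}$, $i\neq j$, in the rotated local product basis --- but you then defer that crux to ``a partial-transpose or a local-map argument,'' ``a faithful entanglement witness or the PPT criterion,'' and ``monotonicity of correlations under local channels.'' As written these do not assemble into an argument: PPT is not a faithful separability criterion in the dimensions at issue (you need the implication \emph{NPT $\Rightarrow$ entangled}, applied to the specific flagged state, not faithfulness), and the mutual-information monotonicity corollary cannot distinguish separable from entangled states at all, so it contributes nothing here. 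The missing computation is, however, genuinely available on your route: the post-interaction state is maximally correlated across the $\M{S}{:}\M{M}$ cut, $\tilde{\rho}=\sum_{m,n}r_{mn}\ketbra{m}{n}_{\M{S}}\otimes\ketbra{m}{n}_{\M{M}}$ with $m=(i,k)$ running over the local product basis, and its partial transpose restricted to $\mathrm{span}\{\ket{m}\ket{n},\ket{n}\ket{m}\}$ is $\bigl(\begin{smallmatrix}0 & r_{mn}\\ r_{nm} & 0\end{smallmatrix}\bigr)$, which has eigenvalue $-\abs{r_{mn}}$; hence any surviving coherence makes $\tilde{\rho}$ NPT and therefore entangled. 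You must actually exhibit this (or an equivalent witness) for the claim to stand.

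For comparison, the paper avoids the block/PPT analysis entirely: it decomposes the assumed-separable output into pure product states $\ket{\phi_\alpha}\ket{\psi_\alpha}$, pulls each back through the unitary to an ensemble $\{p_\alpha,\ket{\kappa_\alpha}\}$ realizing $\rho_{\M{S}}$, and notes that $U\ket{\kappa_\alpha}\ket{0}=\sum_{l,j}c^{\alpha}_{l,j}\ket{a_l,b_j}_{\M{S}}\otimes\ket{a_l,b_j}_{\M{M}}$ is already in Schmidt form across the $\M{S}{:}\M{M}$ cut, so being a product forces a single nonvanishing coefficient; each $\ket{\kappa_\alpha}$ is then an orthogonal product basis state and $\rho_{\M{S}}$ is classical-classical. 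That ensemble-plus-Schmidt-rank argument is more elementary (it needs no positivity computation and works state-by-state on pure components), whereas your density-operator route, once the NPT computation is supplied, has the advantage of quantifying the obstruction (the negativity is controlled by the surviving coherences $\abs{r_{mn}}$), which connects naturally to the quantitative statements in Eqs.~\eqref{e9} and \eqref{dist.zwwd}. Either route closes the proof; yours just isn't closed yet.
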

\begin{proof}
The proof is performed for the general case, for  measurements on both systems.
\item [\bf If]: 
If the state is classically correlated:
\begin{equation}
\rho_{\M{S}} = \sum_{k,j} p_{k,j} \ketbra{a_k,b_j}{a_k,b_j}_{\M{S}},
\end{equation}
the state after the interaction with the measurement apparatus 
represented by the unitary operation $U_{A:\M{M}_A}\otimes U_{B:\M{M}_B}$ will be: 
\begin{equation}\label{max.cor}
\tilde{\rho}_{\M{S}:\M{M}} = \sum_{k,j} p_{k,j} \ketbra{a_k,b_j}{a_k,b_j}_{\M{S}}
\otimes\ketbra{a_k,b_j}{a_k,b_j}_{\M{M}}, 
\end{equation}
which is clearly separable.
\item [\bf Only if]: 
Given a general separable state between the system and the measurement apparatus:
\begin{equation}
\tilde{\rho}_{\M{S}:\M{M}} = \sum_{\alpha} p_{\alpha} 
\ketbra{\phi_{\alpha}}{\phi_{\alpha}}_{\M{S}}\otimes \ketbra{\psi_{\alpha}}{\psi_{\alpha}}_{\M{M}}, 
\end{equation}
and the fact that the interaction is unitary, there is a convex combination such that 
$\rho_{\M{S}}= \sum_{\alpha} p_{\alpha} \ketbra{\kappa_{\alpha}}{\kappa_{\alpha}}$, 
therefore the interaction must act in the following way:
\begin{equation}\label{e7}
U_{\M{S}:\M{M}} \ket{\kappa_{\alpha}}\ket{0} = \ket{\phi_{\alpha}}\ket{\psi_{\alpha}}. 
\end{equation}  
On the other hand, as the state $\rho_{\M{S}}$ is bipartite, the pure states 
$\{\ket{\kappa_{\alpha}}\}$ 
can be written in general as: 
$\ket{\kappa_{\alpha}}=\sum_{l,i} c_{l,i}^{\alpha} \ket{a_l^{\alpha}}\ket{b_i^{\alpha}}$, 
and after the interaction the states will be:
\begin{equation}\label{e8}
U_{\M{S}:\M{M}} \ket{\kappa_{\alpha}}\ket{0}  = 
\sum_{l,j} c_{l,j}^{\alpha} \ket{a_l^{\alpha},b_j^{\alpha}}_{\M{S}}\otimes \ket{a_l^{\alpha},b_j^{\alpha}}_{\M{M}}.
\end{equation}
As the state in Eq.\eqref{e8} must be separable, it implies that the 
coefficients must satisfy:
\begin{equation}
c_{i,j}^{\alpha} = c_{f(\alpha)}\delta_{i,j;f(\alpha)}\quad \text{and}\quad
|c_{f(\alpha)}|=1
\end{equation}
where $f(\alpha)\in \B{N}^2$. As $f(\alpha)$ are orthogonal it proves the theorem.\\ 
\end{proof}
If the state of the system has quantum correlations, the local measurement process creates entanglement between the system 
and the measurement apparatus, for a every unitary interaction.
Then it is possible to fix the base of the ancilla, and change the base of the system. Then rewriting the evolution as
$U_{\M{S}:\M{M}}= C_{\M{S}:\M{M}}(U_{\M{S}}\otimes \id_{\M{M}})$, where 
for bipartite systems $U_{\M{M}} = U_A\otimes U_B$ is a local unitary operation and 
$C_{\M{S}:\M{M}} = C_{A:\M{M}_A}\otimes C_{B:\M{M}_B}$ 
is a Cnot gate acting on the system as the control, and the apparatus as the target. 
It is possible to quantify the amount of  
quantum correlation in a given system starting on the amount of entanglement created with the measurement apparatus.
\begin{defi}[\cite{pianiprl,streltsov11}]
Each measure of entanglement used to quantify the entanglement 
between the system and the apparatus will result in a measure of 
quantumness of correlations.
\begin{equation}\label{e9}
Q_{E}(\rho_{\M{S}}) = \min_{U_{\M{S}}}E_{Q}(\rho_{\M{S}:\M{M}}). 
\end{equation} 
\end{defi}
Different entanglement
measures will lead, in principle, to different quantifiers 
for the quantumness of correlations. The only requirement 
is that the entanglement measure must be an entanglement  monotone  \cite{pianiprl,streltsov11,piani12}.  
Some quantifier of quantumness of correlations can be recovered with the activation protocol: 
the quantum discord \cite{streltsov11}, one-way work deficit \cite{streltsov11}, 
zero way work deficit \cite{pianiprl} and the geometrical measure 
of discord via trace norm \cite{pianiadesso}, are some examples.   
Taking the distillable entanglement in Eq.\eqref{e9} is quite simple to see that it results in zero way 
work deficit. As showed in Eq.\eqref{max.cor}, the interaction with the measurement apparatus results in the 
state 
\begin{equation}
\tilde{\rho}_{\M{S}:\M{M}} = \sum_{k,j} p_{k,j} \ketbra{a_k,b_j}{a_k,b_j}_{\M{S}}
\otimes\ketbra{a_k,b_j}{a_k,b_j}_{\M{M}}.
\end{equation}
That  is named {\it maximally correlated state}, and as showed in Ref.\cite{hiroshima}, the distillable 
entanglement of this state attach the Hashing inequality \cite{hashingineq}:
\begin{equation}\label{dist}
E_{D}(\tilde{\rho}_{\M{S}:\M{M}}) = - S(\M{S}|\M{M}),
\end{equation}
where $S(\M{S}|\M{M}) = S(\tilde{\rho}_{\M{S}}) - S(\tilde{\rho}_{\M{S}:\M{M}})$ 
is condicional entropy of $\tilde{\rho}_{\M{S}:\M{M}}$. On the other hand, the zero way work 
deficit for $\rho_{\M{S}}$ is:
\begin{equation}
\Delta^{\emptyset}(\rho_{\M{S}}) = \min_{\Pi_{\M{S}_A}\otimes\Pi_{\M{S}_B}\in\M{P}}\left\{ 
S(\Pi_{\M{S}_A}\otimes\Pi_{\M{S}_B}[\rho_{\M{S}}]) - S(\rho_{\M{S}})  \right\},
\end{equation}
where $\Pi_{\M{S}_A}\otimes\Pi_{\M{S}_B}\in\M{P}(\B{C}_{\M{S}_A}\otimes \B{C}_{\M{S}_B})$ is a local dephasing 
on subsystem $A$ and $B$. As $\tilde{\rho}_{\M{S}}$ is the measured state of the system and 
$\tilde{\rho}_{\M{S}:\M{M}}=U_{\M{S}:\M{M}} \rho_{\M{S}:\M{M}}U^{\dagger}_{\M{S}:\M{M}}$, then: 
\[ S(\tilde{\rho}_{\M{S}}) - S(\tilde{\rho}_{\M{S}:\M{M}}) = S(\Pi_{\M{S}_A}\otimes\Pi_{\M{S}_B}[\rho_{\M{S}}]) - S(\rho_{\M{S}}). \] 
Therefore: 
\begin{equation}\label{dist.zwwd}
\Delta^{\emptyset}(\rho_{\M{S}})=\min_{U_{\M{M}}}E_{D}(\tilde{\rho}_{\M{S}:\M{M}}).
\end{equation}
This equation means that the activation protocol creates distillable entanglement between the 
system and the measurement apparatus during a local measurement. In other words, quantumness 
of correlations of the system can be converted resource for quantum information protocol, and this 
conversion is ruled by the activation protocol. 

From the Eq.\eqref{e9}, it is possible to show that quantum entanglement is a lower 
bound for quantumness of correlations.
\begin{prop}[M. Piani and G. Adesso \cite{pianiadesso}]
For $\rho_{AB}\in\M{D}(\B{C}_A\otimes\B{C}_B)$: 
\begin{equation}
Q_E(\rho_{AB})\geq E_{Q}(\rho_{AB}),
\end{equation} 
where $Q_E$ and $E_Q$ are related by the Eq.\eqref{e9}.
\end{prop}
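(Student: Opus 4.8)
The plan is to lean on the fact that $E_Q$ is an entanglement monotone (invariant under local unitaries and under appending or discarding local pure ancillas, non-increasing under LOCC) together with the explicit form of the post-interaction state. First I would dispose of the minimization in Eq.~\eqref{e9}: since $E_Q(\rho_{AB})$ is invariant under the local unitary $U_{\M{S}}=U_A\otimes U_B$, it suffices to prove, for each fixed $U_{\M{S}}$, that $E_Q(\tilde{\rho}_{\M{S}:\M{M}})\geq E_Q(\rho_{AB})$; replacing $\rho_{AB}$ by $U_{\M{S}}\rho_{AB}U_{\M{S}}^{\dagger}$ we may take $U_{\M{S}}=\id$, so the interaction is the pure copy (Cnot) map $\ket{i}_A\ket{0}_{\M{M}_A}\mapsto\ket{i}_A\ket{i}_{\M{M}_A}$ on each side. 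Writing $\rho_{AB}=\sum_{ij,kl}\rho_{ij,kl}\ketbra{i}{j}_A\otimes\ketbra{k}{l}_B$ and grouping the combined index $I=(i,k)$, a direct computation gives
\begin{equation}
\tilde{\rho}_{\M{S}:\M{M}}=\sum_{I,J}\bra{I}\rho_{AB}\ket{J}\,\ketbra{I}{J}_{\M{S}}\otimes\ketbra{I}{J}_{\M{M}},
\end{equation}
that is, $\tilde{\rho}_{\M{S}:\M{M}}$ is the maximally correlated state whose coefficient matrix, in the combined product basis, is precisely $\rho_{AB}$.

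Next I would organize the comparison through an auxiliary bipartition. Regrouping the four registers as $A\M{M}_A : B\M{M}_B$ (each party kept together with its own apparatus), the whole interaction consists only of appending a local pure ancilla and applying a local copy unitary, operations that are reversible by locally discarding the copy. Hence by local-unitary and ancilla invariance $E_Q^{A\M{M}_A:B\M{M}_B}(\tilde{\rho})=E_Q(\rho_{AB})$, which locates the target quantity inside the enlarged space. The proposition then reduces to the single inequality $E_Q^{\M{S}:\M{M}}(\tilde{\rho})\geq E_Q^{A\M{M}_A:B\M{M}_B}(\tilde{\rho})$ between two bipartitions of the same four-party state. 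This is where the maximally correlated structure must be used: the apparatus registers $\M{M}_A,\M{M}_B$ are faithful copies of the system in the measured basis, removing or reattaching such a copy is a local channel, and the quantum coherence that survives across the system-apparatus cut $\M{S}:\M{M}$ is exactly the coherence encoding the $A:B$ entanglement of $\rho_{AB}$.

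The hard part is precisely this last step, because the two bipartitions are transversal: $A$ and $B$ lie on the same side of the $\M{S}:\M{M}$ cut, so one cannot run an LOCC across $\M{S}:\M{M}$ to literally recover $\rho_{AB}$ on the $A:B$ cut (the coherences of $A$ are irretrievably tied to $\M{M}_A$, since discarding $\M{M}_A$ only dephases $A$). I would therefore establish the inequality at the level of the maximally correlated family itself, using that $\tilde{\rho}_{\M{S}:\M{M}}$ is isospectral to $\rho_{AB}$ and that its marginals are the dephased copies, to show that any entanglement monotone evaluated across $\M{S}:\M{M}$ dominates the same monotone evaluated on the coefficient matrix across $A:B$; the relevant consistency check is provided by the tight cases, such as Bell-diagonal inputs, where both sides collapse to the concurrence. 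Once this structural inequality is secured, combining it with $E_Q^{A\M{M}_A:B\M{M}_B}(\tilde{\rho})=E_Q(\rho_{AB})$ yields $E_Q(\tilde{\rho}_{\M{S}:\M{M}})\geq E_Q(\rho_{AB})$ for every $U_{\M{S}}$, and minimizing over $U_{\M{S}}$ delivers $Q_E(\rho_{AB})\geq E_Q(\rho_{AB})$.
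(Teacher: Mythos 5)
The paper itself offers no proof of this proposition --- it is stated with the attribution to Piani and Adesso and followed only by two sentences of discussion --- so there is nothing in the text to compare your argument against; it has to stand on its own. Your setup is correct and is indeed the standard skeleton: absorbing $U_{\M{S}}$ by local-unitary invariance, computing that $\tilde{\rho}_{\M{S}:\M{M}}=\sum_{I,J}\bra{I}\rho_{AB}\ket{J}\ketbra{I}{J}_{\M{S}}\otimes\ketbra{I}{J}_{\M{M}}$, and using invariance under local isometries to get $E_Q^{A\M{M}_A:B\M{M}_B}(\tilde{\rho})=E_Q(\rho_{AB})$, which reduces everything to the cross-cut inequality $E_Q^{\M{S}:\M{M}}(\tilde{\rho})\geq E_Q^{A\M{M}_A:B\M{M}_B}(\tilde{\rho})$.

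The problem is that this last inequality \emph{is} the proposition, and you do not prove it: you correctly explain why the obvious LOCC arguments fail across transversal cuts and then substitute the assertion that isospectrality with $\rho_{AB}$ plus the structure of the marginals should force the inequality, checked on Bell-diagonal examples. That cannot work as stated. Entanglement across a cut is not a function of the global spectrum and the marginals: the four-party states $\Phi_{AB}\otimes\Phi_{A'B'}$ and $\Phi_{AA'}\otimes\Phi_{BB'}$ (with $\Phi$ maximally entangled) are isospectral with identical single-party marginals, yet carry $0$ versus $2$ ebits across the $AB:A'B'$ cut. So any closing argument must use the maximally correlated form in an essential, constructive way. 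One route that actually closes it, at least for pure inputs and convex-roof monotones: for $\rho_{AB}=\ketbra{\psi}{\psi}$ the state $\tilde{\rho}$ is pure across $\M{S}:\M{M}$ with reduced state $\tilde{\rho}_{\M{M}}$ equal to the product-basis dephasing of $\ketbra{\psi}{\psi}$; its spectrum $\{\abs{\braket{i,k}{\psi}}^2\}_{i,k}$ is a refinement of $\text{diag}(\rho_B)$ and hence, by Schur--Horn and transitivity, is majorized by $\text{spec}(\rho_B)$, so Nielsen's theorem yields a deterministic LOCC across $\M{S}:\M{M}$ producing a pure state with the Schmidt vector of $\ket{\psi}$, whence $E_Q^{\M{S}:\M{M}}(\tilde{\rho})\geq E_Q(\ket{\psi})$ for every LOCC monotone; the mixed case then follows for convex-roof measures because $(V_A\otimes V_B)$ puts decompositions of $\tilde{\rho}$ and of $\rho_{AB}$ in bijection. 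Handling arbitrary entanglement monotones beyond the convex-roof class requires the additional machinery of the cited reference; as written, your proposal records the right reduction but leaves the theorem's actual content unestablished.
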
 
To compare two measures of different quantities as 
quantumness of correlation and quantum entanglement, it is necessary a common rule.  
The activation protocol gives the rule to compare these two quantities and this 
rule says that the measures of quantumness of correlations and 
quantum entanglement must be related from Eq.\eqref{e9}. Entanglement is a lower bound for quantumness of correlations also in the 
geometrical approach \cite{debarbapra,dreply13}.

Activation protocol determines that a composed state is 
classically correlated if and only if it cannot create entanglement during the measurement process, for a given unitary interaction \cite{pianiprl,streltsov11,piani12}. 
This results provides an important tool for characterization of quantum correlations in identical particles systems (bosons and fermions), once that 
system and apparatus are distinguishable partitions, even if the particles in the system are identical. This approach was performed by Iemini et. al to 
prove how are the classically correlated states of bosons and fermions \cite{iemini14}. The activation protocol device also allows to determine the 
class of classically correlated states of the modes of a fermionic system, and its relation to the correlations of the fermions \cite{debarba17}.

The entanglement generation by means of quantumness of correlations, as stated by the activation protocol,  was experimentally evidenced using programmable quantum measurement \cite{adesso2014prl}. In the 
experiment setup the optimization on the unitary operations was performed by a set of programable quantum 
measurements in different local basis. 
As  quantumness of correlation can be generated by local operations \cite{zurek2001}, 
activation protocol was explored experimentally in the generation of 
distillable entanglement via local operations on the measured partition of the system \cite{brussprl15}.

\section{Conclusion}

This chapter leads to the fundamental aspects of quantum correlations: entanglement and 
quantumness of correlations. The purpose of the chapter was to demonstrate that 
quantumness of correlations plays an important role in entanglement resource theory, and by consequence 
in quantum information theory. It was presented that 
entanglement and quantumness of correlations connect each other in two different pictures.  
The relation derived by Koashi and Winter demonstrates the balance between quantumness of 
correlations and entanglement in the purification process.  
This balance leads to a formal proof for the irreversibility of the entanglement distillation protocol, 
in terms of quantumness of correlations. Indeed in this fashion quantumness of correlations revealed to 
play the main role in the state merging protocol, quantifying the amount of entanglement consumed during 
the protocol.   In the named {\it activation protocol}, the quantumness of correlations of a given composed 
 system can be converted into distillable entanglement with a measurement apparatus during the local measurement process. In resume the entanglement created by the interaction between the system and the measurement 
 apparatus is limited below by the amount of quantumness of correlations of the system.

\section{Acknowledgments}
This work is supported by INCT - Quantum Information.

\end{document}